\newcommand{\typesonly}[1]{}
\newcommand{\arxivonly}[1]{#1}
\newcommand{\typesorarxiv}[2]{#2}
  \definecolor{darkgreen}{rgb}{0,0.45,0}
  \definecolor{darkred}{rgb}{0.75,0,0}
  \definecolor{darkblue}{rgb}{0,0,0.6}
  \theoremstyle{plain}
  \newtheorem{theorem}{Theorem}
  \newtheorem{lemma}[theorem]{Lemma}
  \newtheorem{corollary}[theorem]{Corollary}
  \theoremstyle{definition}
  \theoremstyle{remark}
  \newtheorem*{remark}{Remark}
\newcommand{\id}{\operatorname{id}}
\newcommand{\refl}{\operatorname{refl}}
\newcommand{\pr}{\operatorname{pr}}
\newcommand{\isProp}{\operatorname{isProp}}
\newcommand{\trunc}[1]{\left\| #1 \right\|}
\newcommand{\bracket}[1]{\left( #1 \right)}
\newcommand{\UU}{\mathcal{U}}
\newcommand{\VV}{\mathcal{V}}
\newcommand{\ttt}{\star}
\newcommand{\Empty}{\mathbf{0}}
\newcommand{\unit}{\mathbf{1}}
\newcommand{\bool}{\mathbf{2}}
\newcommand{\true}{{\operatorname{t\!t}}}
\newcommand{\false}{{\operatorname{f\!f}}}
\newcommand{\inl}{{\operatorname{inl}}}
\newcommand{\inr}{{\operatorname{inr}}}
\newcommand{\idfunc}[1][]{\operatorname{id}_{#1}}
\newcommand{\N}{\mathbb{N}}
\renewcommand{\equiv}{\simeq}
\newcommand{\LEM}{\ensuremath{\operatorname{LEM}}}
\newcommand{\DNE}{\ensuremath{\operatorname{DNE}}}
\title{Parametricity, automorphisms of the universe, and excluded middle}
  \author[1]{Auke B. Booij}
  \author[2]{Mart{\'\i}n H. Escard{\'o}}
  \author[3]{Peter LeFanu Lumsdaine}
  \author[4]{Michael Shulman\thanks{Supported by The United States Air Force Research Laboratory under agreement number FA9550-15-1-0053.  The U.S. Government is authorized to reproduce and distribute reprints for Governmental purposes notwithstanding any copyright notation thereon.  The views and conclusions contained herein are those of the authors and should not be interpreted as necessarily representing the official policies or endorsements, either expressed or implied, of the United States Air Force Research Laboratory, the U.S. Government, or Carnegie Mellon University.}}
  \affil[1]{School of Computer Science, University of Birmingham,
    Birmingham, UK\\
    \texttt{abb538@cs.bham.ac.uk}}
  \affil[2]{School of Computer Science, University of Birmingham,
    Birmingham, UK\\
    \texttt{m.escardo@cs.bham.ac.uk}}
  \affil[3]{Mathematics Department, Stockholm University,
    Stockholm, Sweden\\
    \texttt{p.l.lumsdaine@math.su.se}}
  \affil[4]{Department of Mathematics, University of San Diego,
    San Diego, USA\\
    \texttt{shulman@sandiego.edu}}
  \author[1]{Auke B. Booij}
  \author[1]{Mart{\'\i}n H. Escard{\'o}}
  \affil{School of Computer Science, University of Birmingham,
    Birmingham, UK}
  \author[2]{Peter LeFanu Lumsdaine}
  \affil{Mathematics Department, Stockholm University,
    Stockholm, Sweden}
  \author[3]{Michael Shulman\thanks{Supported by The United States Air Force Research Laboratory under agreement number FA9550-15-1-0053.  The U.S. Government is authorized to reproduce and distribute reprints for Governmental purposes notwithstanding any copyright notation thereon.  The views and conclusions contained herein are those of the authors and should not be interpreted as necessarily representing the official policies or endorsements, either expressed or implied, of the United States Air Force Research Laboratory, the U.S. Government, or Carnegie Mellon University.}}
  \affil{Department of Mathematics, University of San Diego,
    San Diego, USA}
  \authorrunning{A.\,B.~Booij, M.\,H.~Escard{\'o}, P.\,LeF.~Lumsdaine, and M.~Shulman}
  \subjclass{F.4.1 Mathematical Logic.}
  \keywords{relational parametricity, dependent type theory, univalent foundations, homotopy type theory, excluded middle, classical mathematics, constructive mathematics.}
\begin{document}

\maketitle

\begin{abstract}

  It is known that one can construct non-parametric
  functions by assuming classical axioms.  Our work is a converse
  to that: we prove classical axioms in dependent type theory
  assuming specific instances of non-parametricity.
  We also address the interaction between classical axioms and the existence of automorphisms of a type universe.
  We work over intensional Martin-L\"of dependent type
  theory, and for some results assume further principles
  including function extensionality,
  propositional extensionality, propositional truncation, and
  the univalence axiom.

\arxivonly{
  \paragraph{Keywords.} Relational parametricity, dependent
    type theory, univalent foundations, homotopy type theory, excluded
    middle, classical mathematics, constructive
    mathematics.
}
\end{abstract}

\section{Introduction: Parametricity in dependent type theory}

Broadly speaking, \emph{parametricity} statements assert that type-polymorphic
functions definable in some system must be natural in their type arguments,
in some suitable sense.
Reynolds' original theory of relational
parametricity~\cite{DBLP:conf/ifip/Reynolds83}~characterizes terms
of the polymorphically typed $\lambda$-calculus System~F.  This theory
has since been extended to richer and more expressive
type theories: to pure type systems by Bernardy,
Jansson, and Paterson~\cite{DBLP:journals/jfp/BernardyJP12}, and more
specifically to
dependent type theory by Atkey, Ghani, and Johann~\cite{DBLP:conf/popl/AtkeyGJ14}.

Most parametricity results are meta-theorems about a formal system and
make claims only about terms in the empty context.
For instance,
Reynolds' results show that the only term of System~F
with type $\forall \alpha . \alpha \to \alpha$ definable in the empty context is
the polymorphic identity function $\Lambda \alpha. \lambda (x:\alpha).x$.
Similarly, Atkey, Ghani, and Johann\ \cite[Thm.~2]{DBLP:conf/popl/AtkeyGJ14} prove that any term $f:\prod_{X : \UU} X \to X$
definable in the empty context of MLTT
must satisfy $e(f_X(a)) =f_Y(e(a))$ for all
$e : X \to Y$ and $a:X$ in their model; it follows that
$f$ acts as the identity on every type in their model, and hence no such closed term $f$ can be provably not equal to the polymorphic identity function.

Keller and Lasson showed that excluded middle is incompatible with parametricity of the universe of types (in its usual formulation)~\cite{DBLP:conf/csl/KellerL12}.
In this paper, we show, within type theory, that certain violations of
parametricity are possible if and only if certain classical principles
hold.  For example, we show that there is a function
$f : \prod_{X : \UU} X \to X$ whose value at the type $\bool$ of booleans is
different from the identity if and only if excluded middle holds
(Theorem~\ref{thm:identity-bool}, where one direction uses function
extensionality).

These are theorems \emph{of} dependent type theory, so they apply not only to closed terms but in any context, and the violations of parametricity are expressed using negations of Martin-L\"of's identity type rather than judgemental (in-)equality of terms.
Similarly, we show that excluded middle also follows from certain kinds of non-trivial automorphisms of the universe.

We work throughout in intensional Martin-L\"of type theory, with at least $\Pi$-, $\Sigma$-, identity, finite, and natural numbers types, and a universe closed under these type-formers.
For concreteness, this may be taken to be the theory of \cite{martin-lof:bibliopolis}, or of \cite[A.2]{hottbook}.
When results require further axioms---e.g.\ function extensionality, or univalence of the universe---we include these as explicit assumptions, to keep results as sharp as possible.

By the \emph{law of excluded middle}, we mean always the version from univalent foundations~\cite[3.4.1]{hottbook}, namely that $P + \neg P$
for all \emph{propositions} $P$.
Here a type is called a ``proposition'' (a ``mere proposition'' in the terminology of~\cite{hottbook}) if it
has at most one element, meaning that any two of its elements are
equal in the sense of the identity type.
Note that $\neg P$ (meaning $P\to\Empty$) is not itself a proposition unless we assume function extensionality, at least for $\Empty$-valued functions.

The \emph{propositional truncation} of a type $A$ is the universal proposition $\trunc A$ admitting a map from $A$.
We axiomatize this as in \cite[\textsection 3.7]{hottbook}, and always indicate explicitly when we are assuming it.
It is shown in~\cite{keca:toappear} that propositional truncation implies function extensionality.

When propositional truncations exist, the \emph{disjunction} of two propositions $P\vee Q$ is defined to be $\trunc{P+Q}$.
If $P$ and $Q$ are disjoint (i.e.\ $\neg(P+Q)$ holds), then $P+Q$ is already a proposition and hence equivalent to $P\vee Q$.
In particular, when we have propositional truncations, the law of excluded middle could equivalently assert that $P\vee \neg P$ for all propositions $P$.

By a \emph{logical equivalence} of types $X$ and $Y$, written $X\leftrightarrow Y$, we mean two
functions $X \to Y$ and $Y \to X$ subject to no conditions at all.

By an \emph{equivalence} of types $X$ and $Y$ we mean a function $e:X\to Y$ that has both a left and a right inverse, i.e.\ functions $s,r:Y\to X$ with $e(s(y))=y$ for all $y:Y$ and $r(e(x))=x$ for all $x:X$.
This notion of equivalence is logically equivalent to having a \emph{single} two-sided inverse, which is all that we will need in this paper. But the notion of equivalence is better-behaved in univalent foundations (see~\cite[Chapter 4]{hottbook}); the reason is that the type expressing ``being an equivalence'' is a proposition, in the presence of function extensionality, whereas the type expressing ``having a two-sided inverse'' may in general have more than one inhabitant, in particular affecting the consistency of the univalence axiom.

\section{Classical axioms from non-parametricity}
\label{parametricity:section}

In this section, we give a number of ways in which classical axioms can be derived
from specific violations of parametricity.

\subsection{Polymorphic endomaps}

Say that a function $f:\prod_{X : \UU} X \to X$ is \emph{natural under
  equivalence} if for any two types $X$ and $Y$ and any equivalence
$e:X \to Y$, we have $e(f_X(x)) = f_Y(e(x))$ for any $x:X$, where we have
written $f_X$ as a shorthand for $f(X)$ and used the equality sign $=$ to denote identity types.
\begin{theorem}
\label{thm:identity-bool}
If there is a function $f:\prod_{X : \UU} X \to X$ such that $f_\bool$
is not pointwise equal to the identity (i.e.\ $\neg \prod_{x:\bool} f_\bool(x)=x$)
and $f$ is natural under equivalence, then the law
of excluded middle holds.
Assuming function extensionality, the converse also holds.
\end{theorem}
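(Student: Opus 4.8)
The plan is to prove both implications; write $\LEM$ for the law of excluded middle.

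For the forward direction, I would first pin down $f_\bool$ exactly. Applying naturality under equivalence to the flip automorphism $\flip : \bool \simeq \bool$ yields $\flip(f_\bool(x)) = f_\bool(\flip(x))$ for all $x : \bool$. Since equality in $\bool$ is decidable, it is enough to inspect $f_\bool(\true)$: if $f_\bool(\true) = \true$, then naturality at $x = \true$ gives $f_\bool(\false) = \flip(f_\bool(\true)) = \false$, so $f_\bool$ would be pointwise the identity, contradicting the hypothesis; hence $f_\bool(\true) = \false$, and one more use of naturality gives $f_\bool = \flip$ pointwise. No function extensionality is used here.

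Next, given a proposition $P$, I must produce an element of $P + \neg P$. I would use the type $Y_P := \unit + P$, which always carries the point $\inl(\ttt)$, and ask — decidably, by case analysis on a coproduct — whether $f_{Y_P}(\inl(\ttt))$ lies in the left or the right summand. If it equals $\inr(p')$ for some $p' : P$, we read off $p' : P$. If it lies in the left summand, I claim $\neg P$: from $p : P$ we get $P \simeq \unit$ (an inhabited proposition), so $e : \bool \to Y_P$ with $e(\true) := \inl(\ttt)$ and $e(\false) := \inr(p)$ is an equivalence, its two-sided pointwise inverse being the evident case-split $Y_P \to \bool$ — verifying this uses that $P$ is a proposition but, since the paper's notion of equivalence asks only for pointwise inverses, no function extensionality. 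Then naturality under equivalence at $x = \true$ gives $f_{Y_P}(\inl(\ttt)) = f_{Y_P}(e(\true)) = e(f_\bool(\true)) = e(\false) = \inr(p)$, contradicting that $f_{Y_P}(\inl(\ttt))$ is in the left summand. Either way we obtain $P + \neg P$, so $\LEM$ holds, again with no appeal to function extensionality.

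For the converse, assume function extensionality and $\LEM$. For $X : \UU$ and $x : X$, the type expressing that the complement $\Sigma_{y:X}\neg(y = x)$ of $\{x\}$ is contractible is a proposition (given function extensionality), so $\LEM$ decides it; I define $f_X(x)$ to be the $X$-component of the centre of contraction of $\Sigma_{y:X}\neg(y = x)$ when that type is contractible, and $f_X(x) := x$ otherwise. Taking $X = \bool$ and $x = \true$, the type $\Sigma_{y:\bool}\neg(y = \true)$ is contractible with centre $(\false, -)$, so $f_\bool(\true) = \false$ and $f_\bool$ is not pointwise the identity. For naturality, an equivalence $e : X \simeq Y$ restricts to an equivalence $\Sigma_{y:X}\neg(y = x) \simeq \Sigma_{y':Y}\neg(y' = e(x))$ via $(y,h) \mapsto (e(y), -)$ (using function extensionality to know the second components $\neg(y = x)$ are propositions), so the two $\Sigma$-types are contractible or not simultaneously; when contractible an equivalence carries centre to centre, giving $e(f_X(x)) = f_Y(e(x))$, when not contractible both sides equal $e(x)$, and the ``mixed'' cases are vacuous.

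The delicate point is the converse. One is tempted to let $\LEM$ decide ``$X \simeq \bool$'', but that type is not a proposition, so $\LEM$ does not apply, and invoking propositional truncation would overstep the stated hypotheses. The resolution is to isolate a genuinely propositional, $\LEM$-decidable property that (i) is preserved by equivalence, (ii) canonically produces an element of $X$ when it holds, and (iii) separates $\bool$ from $\unit$ in the right way; contractibility of the complement $\Sigma_{y:X}\neg(y = x)$ achieves all three. In the forward direction the analogous care is in choosing the auxiliary type built from $P$ — here $\unit + P$ — on which the behaviour of $f$, forced by naturality and the computed value $f_\bool = \flip$, places $f_{Y_P}(\inl(\ttt))$ in the left or right summand precisely according to $\neg P$ or $P$, so that a single decidable coproduct case-split yields $P + \neg P$ with no double-negation slippage.
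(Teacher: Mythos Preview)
Your proof is correct and follows essentially the same approach as the paper's: the auxiliary type $\unit+P$ (the paper uses $P+\unit$) with a case split on where $f$ sends the distinguished point, and for the converse the same definition of $f_X(x)$ via contractibility of $\sum_{y:X}\neg(y=x)$. The only minor difference is that you use naturality under the flip automorphism to force $f_\bool(\true)=\false$ outright, whereas the paper does a direct case analysis on $f_\bool(\true)$ and $f_\bool(\false)$ to get $(f_\bool(\true)=\false)+(f_\bool(\false)=\true)$ and then proceeds ``without loss of generality''; your variant is slightly cleaner but not substantively different.
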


\begin{proof}
  First we derive excluded middle from~$f$.
  To begin, note that if
  \arxivonly{\linebreak[4]}
  $\neg \prod_{x:\bool} f_\bool(x)=x$, then we cannot have both $f_\bool(\true)=\true$ and $f_\bool(\false)=\false$, since then we could prove $\prod_{x:\bool} f_\bool(x)=x$ by case analysis on $x$.
  But then by case analysis on $f_\bool(\true)$ and $f_\bool(\false)$, we must have $(f_\bool(\true)=\false)+(f_\bool(\false)=\true)$.
  Without loss of generality, suppose $f_\bool(\true)=\false$.

  Now let $P$ be an arbitrary
  proposition. We do case analysis on
  $f_{P+\unit}(\inr(\ttt)) : P+\unit$.
  \begin{enumerate}
  \item If it is of the form $\inl(p)$ with $p:P$, we conclude
    immediately that $P$ holds.
  \item If it is of the form $\inr(\ttt)$, then $P$ cannot hold, for if
    we had $p:P$, then the map $e : \bool \to P + \unit$ defined by $e(\false)=\inl(p)$ and $e(\true)=\inr(\ttt)$ would be an equivalence, and hence $e(f_\bool(x)) = f_{P+\unit}(e(x))$ for all $x:\bool$ and so
  $ \inl(p) = e(\false) = e (f_\bool(\true)) = f_{P+\unit} (e(\true)) = f_{P+\unit} (\inr(\ttt)) = \inr(\ttt)$, which is a contradiction.
  \end{enumerate}
Therefore $P$ or not $P$.

For the converse, \cite[Exercise~6.9]{hottbook}, suppose
excluded middle holds, let $X:\UU$ and $x:X$, and consider the type
$\sum_{x':X} (x'\neq x)$, where $a\neq b$ means $\neg(a=b)$.  By excluded middle, this is either
contractible or not.  (A type $Y$ is \emph{contractible} if
$\sum_{y:Y} \prod_{y':Y} (y=y')$.  Assuming
function extensionality, this is a proposition.) If it is contractible, define $f_X(x)$ to
be the center of contraction (the point $y$ in the definition of
contractibility); otherwise define $f_X(x)=x$.
\end{proof}

\begin{remark} \leavevmode
  \begin{enumerate}
  \item If we assume univalence, any $f:\prod_{X : \UU} X \to X$ is automatically
    natural under equivalence, so that assumption can be dispensed with.
    And, of course, if function extensionality holds (which follows
    from univalence) then the hypothesis $\neg \prod_{x:\bool} f_\bool(x)=x$
    is equivalent to $f_\bool \neq \lambda (x:\bool). x$.
  \item We do not know whether the converse direction of
    Theorem~\ref{thm:identity-bool} is provable without function
    extensionality.
  \end{enumerate}
\end{remark}

The preceding proof can be generalized as follows.
We say that a point $x:X$ is \emph{isolated} if
the type $x=y$ is decidable for all $y:Y$, i.e.\ if we have $\prod_{y:X} (x=y)+(x\neq y)$.

\begin{lemma}\label{lem:isolated}
  A point $x:X$ is isolated if and only if $X$ is equivalent to
  $Y+\unit$, for some type~$Y$, by a map that sends $x$ to
  $\inr(\ttt)$.
\end{lemma}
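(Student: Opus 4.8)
The plan is to prove both directions of the logical equivalence in Lemma~\ref{lem:isolated} by direct construction.

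\medskip

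\noindent\textbf{The ``if'' direction.} Suppose we are given a type $Y$ and an equivalence $e : X \to Y+\unit$ with $e(x) = \inr(\ttt)$; I want to show $x$ is isolated. Given any $y : X$, I do case analysis on $e(y) : Y+\unit$. If $e(y) = \inr(\ttt)$, then since $e$ is an equivalence it is in particular injective (apply a left inverse $r$: $x = r(e(x)) = r(\inr(\ttt)) = r(e(y)) = y$), so $x = y$ and we are in the left case of $(x=y)+(x\neq y)$. If $e(y) = \inl(y')$ for some $y' : Y$, then $x \neq y$: were $x = y$ we would get $\inr(\ttt) = e(x) = e(y) = \inl(y')$, contradicting the disjointness of the coproduct (which is provable in MLTT, e.g.\ by mapping $\inl$ to $\Empty$ and $\inr$ to $\unit$). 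So $x$ is isolated.

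\medskip

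\noindent\textbf{The ``only if'' direction.} Suppose $x : X$ is isolated, witnessed by $d : \prod_{y:X}(x=y)+(x\neq y)$. The natural candidate is $Y :\equiv \sum_{y:X}(x \neq y)$, the ``complement'' of $x$, with the obvious first-projection-like inclusion $Y \to X$. I would define $e : X \to Y+\unit$ by case analysis on $d(y)$: if $d(y) = \inl(q)$ (so $x=y$), set $e(y) :\equiv \inr(\ttt)$; if $d(y) = \inr(n)$ (so $x \neq y$), set $e(y) :\equiv \inl((y,n))$. The inverse $e^{-1} : Y+\unit \to X$ sends $\inl((y,n))$ to $y$ and $\inr(\ttt)$ to $x$. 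Checking $e^{-1}(e(y)) = y$ for all $y$ is routine: in the $\inl$ case it is immediate, and in the $\inr$ case we use the equality $x = y$ supplied by $d(y)$. Finally $e(x) = \inr(\ttt)$ because $d(x)$ must land in the left summand (if $d(x) = \inr(n)$ with $n : x \neq x$, we get a contradiction from $n(\refl_x)$, so by the elimination rule for $\Empty$ we can rule this out—or more carefully, we argue that whatever form $d(x)$ takes, $e(x)$ is forced to be $\inr(\ttt)$).

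\medskip

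\noindent\textbf{Main obstacle.} The subtle point is verifying the other composite, $e(e^{-1}(z)) = z$ for $z : Y+\unit$, and in particular that $e(x) = \inr(\ttt)$ on the nose, because $e$ is defined by case analysis on $d(y)$ and we cannot assume $d$ behaves ``coherently''—e.g.\ that $d(x)$ is literally $\inl(\refl_x)$. For $z = \inl((y,n))$ we need $e(y) = \inl((y,n))$, but $e(y)$ depends on $d(y)$, which could a priori be $\inl(q)$ for some $q : x = y$ even though we also have $n : x \neq y$; however, $q$ and $n$ together give $n(q) : \Empty$, so this case is absurd and we may eliminate it, leaving $d(y) = \inr(n')$ for some $n'$, whence $e(y) = \inl((y, n'))$; it then remains to identify $(y,n')$ with $(y,n)$, which requires that $x \neq y$ is a proposition—true assuming function extensionality (at least for $\Empty$-valued maps), matching the paper's standing caveat about $\neg P$. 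Without function extensionality one can still get a logical equivalence $X \leftrightarrow Y+\unit$ (which is all the lemma claims, since "equivalence" here was defined to only need a two-sided inverse up to pointwise equality, and pointwise equality of the two composites is what we verify); I would simply be careful to phrase the verification pointwise and invoke the ambient assumptions the paper already grants.
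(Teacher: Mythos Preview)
Your ``if'' direction is fine and matches the paper's one-line argument.

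In the ``only if'' direction you have correctly located the obstacle but not resolved it. You take $Y := \sum_{y:X}(x \neq y)$ and then, to show the round-trip on $\inl((y,n))$, you need $(y,n') = (y,n)$ in $Y$, i.e.\ $n' = n$ in $x \neq y$. You note this needs $x \neq y$ to be a proposition, hence $\Empty$-valued function extensionality. But the lemma carries no such hypothesis, and it is invoked in Theorem~\ref{thm:identity-isolated} where function extensionality is explicitly \emph{not} assumed for the forward direction. Your attempted escape at the end is also incorrect: the lemma asserts an \emph{equivalence} $X \simeq Y+\unit$, not merely a logical equivalence $X \leftrightarrow Y+\unit$; and ``pointwise verification'' does not help, since the pointwise equality you need is exactly $(y,n') = (y,n)$, which is the thing you cannot get.

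The paper's fix is to replace the negation by a proposition that needs no extensionality: from the decidability witness build $d : X \to \bool$ with $d(y) = \true \leftrightarrow (x=y)$ and $d(y) = \false \leftrightarrow (x \neq y)$, and set $Y := \sum_{y:X}(d(y) = \false)$. Because $\bool$ has decidable equality, Hedberg's theorem makes $d(y) = \false$ a proposition unconditionally, so any two witnesses in the second component agree and the round-trip $Y+\unit \to X \to Y+\unit$ goes through. The paper remarks explicitly that your choice $Y = \sum_{y:X}(x\neq y)$ would work only under $\Empty$-valued function extensionality.
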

\begin{proof}
  Since $\inr(\ttt)$ is isolated, such an equivalence certainly
  implies that $x$ is isolated.  Conversely, from
  $\prod_{y:X} (x=y)+(x\neq y)$ we can construct a function
  $d:X\to\bool$ such that $d(y)=\true$ if and only if $x=y$ and $d(y)=\false$, and if and only if
  $x\neq y$.  Let $Y$ be $\sum_{y:X} (d(y)=\false)$; it is straightforward
  to show $X\equiv Y+\unit$.

  If we had function extensionality (for $\Empty$-valued functions),
  we could dispense with $d$ and define $Y = \sum_{y:X} (x\neq y)$,
  since then $x\neq y$ would be a proposition.  In general we use
  $d(y)=\false$ as it is always a proposition (since $\bool$ has
  decidable equality, hence its identity types are propositions by
  Hedberg's theorem); this is necessary to show that the composite
  $Y+\unit \to X \to Y+\unit$ acts as the identity on $Y$.
\end{proof}

%
\begin{theorem}
\label{thm:identity-isolated}
  If there is a function $f:\prod_{X : \UU} X \to X$ such that
  $f_X(x)\neq x$ for some isolated point $x:X$ and $f$ is natural
  under equivalence, then the law of excluded middle holds.
  Assuming function extensionality, the converse also holds.
\end{theorem}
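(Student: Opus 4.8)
The plan is to reduce Theorem~\ref{thm:identity-isolated} to Theorem~\ref{thm:identity-bool} using Lemma~\ref{lem:isolated}, rather than redoing the case analysis from scratch. Suppose we are given $f:\prod_{X : \UU} X \to X$ natural under equivalence, together with a type $X$ and an isolated point $x:X$ such that $f_X(x)\neq x$. By Lemma~\ref{lem:isolated}, there is a type $Y$ and an equivalence $g:X\to Y+\unit$ with $g(x)=\inr(\ttt)$. Transporting along $g$ and using naturality of $f$ under equivalence, we get $g(f_X(x)) = f_{Y+\unit}(g(x)) = f_{Y+\unit}(\inr(\ttt))$. Since $g$ is an equivalence (in particular injective, having a left inverse) and $f_X(x)\neq x$, it follows that $f_{Y+\unit}(\inr(\ttt)) \neq \inr(\ttt)$.

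Now I would essentially replay the second half of the proof of Theorem~\ref{thm:identity-bool}, but with the role of $\bool$ played by $Y+\unit$ and the role of the isolated point $\true$ played by $\inr(\ttt)$. Let $P$ be an arbitrary proposition; we do case analysis on $f_{(Y+\unit)+\unit}\bigl(\inr(\ttt)\bigr) : (Y+\unit)+\unit$, using the second $\unit$ summand as the ``extra point''. If the result lies in the $P$-free part \dots — wait, here the relevant decomposition is different: we want to detect $P$. So instead I would consider, for a proposition $P$, the type $(Y+\unit)+P$ and the point $\inr(\ttt)$ viewed inside it via the left inclusion, i.e.\ $\inl(\inr(\ttt))$, and do case analysis on $f_{(Y+\unit)+P}\bigl(\inl(\inr(\ttt))\bigr)$. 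If it is of the form $\inr(p)$ we are done: $P$ holds. Otherwise it lies in $Y+\unit$; I claim then $P$ cannot hold, because if $p:P$ then the map $\bool\to (Y+\unit)+P$ sending $\true\mapsto \inl(\inr(\ttt))$ and $\false\mapsto\inr(p)$ extends to an equivalence $(Y+\unit)+\unit \equiv (Y+\unit)+P \to \dots$; more directly, from $p:P$ we get an equivalence $e:(Y+\unit)+\unit \to (Y+\unit)+P$ that is the identity on the $Y+\unit$ summand and sends the extra $\inr(\ttt)$ to $\inr(p)$, and combining with the analysis of $f_{(Y+\unit)+\unit}$ above (which we know moves $\inl(\inr(\ttt))$, by the same transport argument applied once more) we derive a contradiction, exactly as in case~2 of Theorem~\ref{thm:identity-bool}. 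Hence $P + \neg P$.

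The cleanest packaging, and the one I would actually write, is to prove a small reduction lemma: \emph{if $f$ is natural under equivalence and $f_{Z+\unit}(\inr(\ttt)) \neq \inr(\ttt)$ for some type $Z$, then excluded middle holds}. This is proved exactly like the $P+\unit$ step in Theorem~\ref{thm:identity-bool} (replace $\bool$ by $Z+\unit$ throughout, noting $\true,\false$ were only used as the two summand-points, of which we now need only the isolated one $\inr(\ttt)$ and an arbitrary $p:P$). Then Theorem~\ref{thm:identity-isolated} follows by applying Lemma~\ref{lem:isolated} to produce such a $Z$ from the isolated point $x$, as in the first paragraph. For the converse, assume function extensionality and excluded middle; the function $f$ constructed in the converse direction of Theorem~\ref{thm:identity-bool} already works verbatim: given $X$ and an isolated $x$, the point $\inr(\ttt)$ in $Y+\unit\equiv X$ witnesses that $\sum_{x':X}(x'\neq x)$ is equivalent to $Y$, and excluded middle decides whether this is contractible; one checks $f_X(x)\neq x$ on the isolated point just as before (if $\sum_{x':X}(x'\neq x)$ were contractible with center $y$ then $f_X(x)=y\neq x$ by definition of $Y$; if not, then — hmm, in that case $f_X(x)=x$, so actually one must choose the isolated point more carefully).

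I expect the main obstacle to be the converse direction: the $f$ from Theorem~\ref{thm:identity-bool} only guarantees $f_X(x)\neq x$ when $\sum_{x':X}(x'\neq x)$ is \emph{contractible}, which for $X\equiv Y+\unit$ and $x=\inr(\ttt)$ means $Y$ is contractible, i.e.\ $X$ has exactly two elements. So the honest statement of the converse is: assuming function extensionality and excluded middle, there \emph{exists} a type $X$ with an isolated point $x$ and $f_X(x)\neq x$ — and one takes $X=\bool$, which is precisely the converse of Theorem~\ref{thm:identity-bool}. So the converse here is literally a special case of what was already proved, and needs no new work; the only real content of Theorem~\ref{thm:identity-isolated} is the forward direction, whose one subtlety is making sure the transport/naturality argument is applied to the equivalence from Lemma~\ref{lem:isolated} with the correct handling (exactly as in that lemma's proof) of the fact that ``having a two-sided inverse'' suffices for our purposes.
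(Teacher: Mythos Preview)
Your reduction via Lemma~\ref{lem:isolated} to the situation $f_{Y+\unit}(\inr(\ttt))\neq\inr(\ttt)$ is correct and matches the paper's setup, and your handling of the converse (simply taking $X=\bool$, which is literally the converse of Theorem~\ref{thm:identity-bool}) is right. The gap is in the case analysis for the forward direction.

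You propose to test the proposition $P$ using the type $(Y+\unit)+P$ and the point $\inl(\inr(\ttt))$. When $f_{(Y+\unit)+P}(\inl(\inr(\ttt)))$ lands in the left summand, you want to derive $\neg P$ by assuming $p:P$ and using the equivalence $e:(Y+\unit)+\unit\to(Y+\unit)+P$. But naturality along $e$ only relates $f_{(Y+\unit)+P}$ to $f_{(Y+\unit)+\unit}$, about which we know nothing: there is no equivalence $(Y+\unit)\equiv(Y+\unit)+\unit$, so the fact that $f_{Y+\unit}$ moves $\inr(\ttt)$ does not transport to any statement about $f_{(Y+\unit)+\unit}$ at $\inl(\inr(\ttt))$. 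Your parenthetical ``which we know moves $\inl(\inr(\ttt))$, by the same transport argument applied once more'' is exactly the unjustified step; no second transport is available.

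The fix, and what the paper does, is to use $P\times Y+\unit$ as the test type with base point $\inr(\ttt)$. Case analysis on $f_{P\times Y+\unit}(\inr(\ttt))$: if it is $\inl((p,y))$ then $P$ holds; if it is $\inr(\ttt)$, assume $p:P$ and observe that now there \emph{is} an equivalence $e:Y+\unit\to P\times Y+\unit$ (sending $\inr(\ttt)\mapsto\inr(\ttt)$ and $\inl(y)\mapsto\inl((p,y))$), so naturality gives $\inr(\ttt)=f_{P\times Y+\unit}(e(\inr(\ttt)))=e(f_{Y+\unit}(\inr(\ttt)))$, which lies in the $\inl$-image since $f_{Y+\unit}(\inr(\ttt))\neq\inr(\ttt)$ --- a contradiction. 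The point your construction misses is that the test type must become equivalent to $Y+\unit$ \emph{itself} (not to $(Y+\unit)+\unit$) when $P$ holds, so that the single piece of information we actually have about $f$ can be invoked via naturality.
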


\begin{proof}
  To derive excluded middle from~$f$, let $Y$ and $X\equiv Y+\unit$ be as in Lemma~\ref{lem:isolated}, and let $P$ be an arbitrary
  proposition. We do case analysis on
  $f_{P\times Y + \unit}(\inr(\ttt)) : P \times Y + \unit$. 
  \begin{enumerate}
  \item If it is of the form $\inl((p,y))$ with $p:P$, we conclude
    immediately that $P$ holds.
  \item If it is of the form $\inr(\ttt)$, then $P$ cannot hold, for
    if we had $p:P$, then the map $e : X \to P\times Y + \unit$
    defined by $e(x)=\inr(\ttt)$ (where $x$ is the isolated point) and
    $e(y)=\inl((p,y))$ for $y \neq x$ would be an equivalence, and
    hence $e(f_X(x)) = f_{P\times Y+\unit}(e(x))$, and so
    $ \inl((p,f_X(x))) = e (f_X(x)) = f_{P\times Y+\unit} (e(x)) =
    f_{P\times Y+\unit} (\inr(\ttt)) =\inr(\ttt)$, which is a contradiction.
  \end{enumerate}
Therefore either $P$ or not $P$ holds.
The converse is proven exactly as in Theorem~\ref{thm:identity-bool}.
\end{proof}

Finally, if our type theory includes propositional truncations, we can dispense with
isolatedness.

\begin{theorem}
\label{thm:identity-proptrunc}
  In a type theory with propositional truncations, there is an
  \arxivonly{\linebreak[4]}
  equivalence-natural function $f : \prod_{X:\UU} X\to X$ and a type
  $X : \UU$ with a point $x:X$ such that $f_X(x)\neq x$ if and only if excluded middle holds.
\end{theorem}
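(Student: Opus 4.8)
The plan has two halves. For the ``if'' direction I would reduce to Theorem~\ref{thm:identity-bool}. Since propositional truncation implies function extensionality, the converse half of that theorem already furnishes an equivalence-natural $f:\prod_{X:\UU}X\to X$ with $\neg\prod_{b:\bool}f_\bool(b)=b$; and the opening observation of its proof (that this forces $(f_\bool(\true)=\false)+(f_\bool(\false)=\true)$) extracts a concrete $b:\bool$ with $f_\bool(b)\neq b$. So one may take $X:=\bool$ and $x:=b$.

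For the forward direction, suppose $f$ is equivalence-natural and $f_X(x)\neq x$, and set $y:=f_X(x)$, so that $x\neq y$. Let $P$ be an arbitrary proposition; the goal is $P+\neg P$. The idea is to let a propositionally-truncated fibration over $X$ do the work that isolatedness does in Theorem~\ref{thm:identity-isolated}. Put $C(z):=\trunc{(z=x)+P}$ (equivalently $\trunc{z=x}\vee P$), which is a mere proposition, and $T_P:=\sum_{z:X}C(z)$, with the distinguished point $\xi:=(x,|\inl(\refl_x)|):T_P$ --- available with no hypothesis on $P$, because $\refl_x$ inhabits $C(x)$ unconditionally. The two facts to establish are: (1) if $P$ holds then $\pr_1:T_P\to X$ is an equivalence with $\pr_1(\xi)=x$, because then each $C(z)$ is an inhabited proposition, hence contractible; and (2) writing $z_0:=\pr_1(f_{T_P}(\xi))$, we have $P\to(z_0=y)$ --- which follows by feeding the equivalence of (1) and the point $\xi$ into the hypothesis that $f$ is natural under equivalence, since that yields $\pr_1(f_{T_P}(\xi))=f_X(\pr_1(\xi))=f_X(x)=y$ whenever $P$ holds.

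To conclude, observe that $t_0:=\pr_2(f_{T_P}(\xi))$ inhabits $C(z_0)=\trunc{(z_0=x)+P}$, and that $P+\neg P$ is a mere proposition ($P$ and $\neg P$ are disjoint, and $\neg P$ is a proposition by function extensionality, which we have). Hence it suffices to give a map $g:(z_0=x)+P\to P+\neg P$; as the codomain is a proposition, $g$ factors through $\trunc{(z_0=x)+P}$, and applying this factorization to $t_0$ settles $P+\neg P$. I would take $g(\inr(p)):=\inl(p)$, and $g(\inl(e))$, with $e:z_0=x$, to be $\inr$ of the following witness of $\neg P$: given $p:P$ we have $z_0=y$ by~(2), whence $x=z_0=y$, contradicting $x\neq y$. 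Since $P$ was arbitrary, excluded middle follows.

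The genuinely delicate step, and the one I expect to be the main obstacle, is pinning down the fibration $C$: it must be a \emph{proposition} (so that $\pr_1$ becomes an equivalence exactly when $P$ is inhabited), it must be inhabited over $z=x$ \emph{unconditionally} (so that $\xi$ exists before we know anything about $P$), and its fiber over the image point $z_0$ must be precisely a truncated sum we can eliminate into the proposition $P+\neg P$. The disjunction $\trunc{z=x}\vee P$ is exactly what meets all three requirements at once, and this --- together with the final truncation-recursion --- is where propositional truncation is essential: it is what lets us sidestep ever deciding whether $z_0=x$, which is the point at which a naive copy of the proof of Theorem~\ref{thm:identity-isolated} would break down without $x$ being isolated.
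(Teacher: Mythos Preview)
Your proof is correct and follows essentially the same approach as the paper: the same fibration $C(z)=\trunc{z=x}\vee P$ over $X$, the same basepoint, and the same use of naturality along $\pr_1$ when $P$ holds. The only differences are notational (the paper writes $Z$, $z$, and names your $z_0$ as $y$) and that you spell out the truncation-recursion into $P+\neg P$ a bit more explicitly than the paper does.
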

\begin{proof}
  For the ``if'' direction, note that propositional truncation
  implies function extensionality~\cite{keca:toappear}, so the converse direction of
  Theorem~\ref{thm:identity-bool} applies.  For the ``only if'' direction,
  assume that we are given $f : \prod_{X:\UU} X\to X$, a type $X:\UU$
  and a point $x:X$ with $f_X(x) \ne x$. Let $P$ be any proposition,
  and define
\begin{gather*}
Z  = \sum_{y:X} \trunc{x=y} \vee P, \qquad
z =  (x,|\inl(|\refl_x|)|) : Z, \qquad
y =  \pr_1(f_Z(z)) : X.
\end{gather*}
Recall that $A\vee B$ denotes the truncated disjunction $\trunc{A+B}$.
This binds more tightly than $\Sigma$, so $Z = \sum_{y:X} (\trunc{x=y} \vee P)$.
We write $|a|:\trunc{A}$ for the witness induced by a point $a:A$.

Now the second projection $\pr_2(f_Z(z))$ tells us that
$\trunc{x=y} \vee P$.  However, if $P$ holds, then $\pr_1:Z \to X$ is an
equivalence that maps $z$ to $x$. Thus $f_Z(z) \neq z$ and
hence $x\neq y$.
In other words, $P\to (x\neq y)$, hence $(x=y) \to \neg P$ and so also $\trunc{x=y}\to \neg P$.
But since $\trunc{x=y} \vee P$, we have $\neg P \vee P$, which (in the presence of function extensionality) is equivalent to excluded middle.
\end{proof}

\begin{remark}\leavevmode
  \begin{enumerate}
  \item If $x:X$ happens to be isolated, then the type $Z$ defined in
    the proof of Theorem~\ref{thm:identity-proptrunc} is equivalent to
    the type $P\times Y+\unit$ used in the proof of
    Theorem~\ref{thm:identity-isolated}.
  \item Since propositional truncation implies function extensionality~\cite{keca:toappear},
    it makes excluded middle into a proposition.  Thus, the existence hypothesis of
    Theorem~\ref{thm:identity-proptrunc} can be truncated or untruncated without change of meaning.
  \item  The hypothesis can also be formulated as
    ``there is a type $X$ such that $f_X$ is apart from the
    identity of $X$'', where two functions $g,h:A \to B$ of types $A$
    and $B$ are \emph{apart} if there is $a:A$ with $g(a) \ne
    h(a)$. We don't know whether it is possible to derive
    excluded middle from the weaker assumption that $f_X$ is simply
    \emph{unequal} to the identity function of $X$, or even that $f$
    is unequal to the polymorphic identity function.
  \end{enumerate}
\end{remark}

The above can be applied to obtain classical axioms from
other kinds of violations of parametricity.  As a simple example,
consider $f:\prod_{X : \UU} (X \to X) \to (X \to X)$.  Parametric
elements of this type are Church numerals.
Given $f$, we can define a polymorphic endomap
$g:\prod_{X : \UU} X \to X$ by $g_X=f_X(\idfunc[X])$, where
$\idfunc[X]$ is the identity function.  If $f$ is natural under
equivalence, then so is $g$, and hence the assumption that
$f_\bool(\idfunc[\bool])$ is \emph{not} the identity function gives
excluded middle, assuming function extensionality.

\subsection{Maps of the universe into the booleans}
\label{sec:u-to-bool}

A function $f : \UU \to \bool$ is \emph{invariant under equivalence}, or
\emph{extensional}, if we have $f(X)=f(Y)$ for any two equivalent types $X$
and~$Y$. We say that it is \emph{strongly non-constant} if we have
$X,Y:\UU$ with $f(X) \neq f(Y)$.
Assuming function extensionality, Escard{\'o} and
Streicher~\cite[Thm.~2.2]{DBLP:journals/apal/EscardoS16} showed that if $f : \UU \to
\bool$ is extensional and strongly non-constant, then the weak limited
principle of omniscience holds (any function $\N \to \bool$ is
constant or not).  Alex Simpson strengthened this as follows (also reported
in~\cite[Thm.~2.8]{DBLP:journals/apal/EscardoS16}):

\begin{theorem}[Simpson]\label{thm:simpson}
Assuming function extensionality for $\Empty$-valued functions,
there is an extensional, strongly non-constant function $f:\UU \to
\bool$ if and only if weak excluded middle holds (meaning that
$\neg A + \neg\neg A$ for all $A : \UU$).
\end{theorem}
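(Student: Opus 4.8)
The plan is to prove the two implications separately; the forward (``only if'') direction rests on the single gadget $Z_A := (X_0\times\neg A) + (X_1\times\neg\neg A)$, where $X_0,X_1:\UU$ are the witnesses of strong non-constancy.

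\emph{The ``if'' direction.} Assume weak excluded middle. Instantiating it at each $X:\UU$ gives a point $w_X:\neg X+\neg\neg X$; define $f(X):=\true$ when $w_X$ is of the form $\inl(-)$ and $f(X):=\false$ when it is of the form $\inr(-)$. Since $\neg\Empty$ holds while $\neg\neg\Empty$ maps to $\Empty$ by evaluation at $\id_\Empty$, a case analysis on $w_\Empty$ yields $f(\Empty)=\true$; dually, from $\neg\unit\to\Empty$ we get $f(\unit)=\false$; as $\true\neq\false$, this $f$ is strongly non-constant. For extensionality, an equivalence $X\equiv Y$ lets us transport negations in both directions (precompose with either underlying map), and then a four-way case analysis on $(w_X,w_Y)$ gives $f(X)=f(Y)$: the two ``matching'' cases are immediate by definition, and in the two ``mismatched'' cases the hypotheses are contradictory, so the equality in $\bool$ follows from $\Empty$-elimination. (This direction in fact needs no function extensionality.)

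\emph{The ``only if'' direction.} Suppose $f:\UU\to\bool$ is extensional and $f(X)\neq f(Y)$ for some $X,Y:\UU$. Using decidability of equality in $\bool$ we may rename so that $f(X_0)=\false$ and $f(X_1)=\true$. Fix $A:\UU$ and form $Z:=Z_A$ as above. The crucial observation is a pair of \emph{conditional} equivalences: if $\neg A$ holds then the right summand of $Z$ is empty (from $n:\neg A$ and any $(x_1,m)$ we get $m(n):\Empty$) while $\neg A$ is contractible --- and this is exactly where function extensionality for $\Empty$-valued functions enters, making $\neg A$ a proposition --- so that $Z\equiv X_0$; symmetrically, if $A$ holds then $\neg\neg A$ is a contractible proposition, the left summand of $Z$ is empty, and $Z\equiv X_1$. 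Now case-split on the concrete boolean $f(Z)$. If $f(Z)=\true$: assuming $n:\neg A$ we would get $Z\equiv X_0$, hence $f(Z)=f(X_0)=\false$ by extensionality, contradicting $\true\neq\false$; therefore $\neg\neg A$, and we return $\inr$. If $f(Z)=\false$: assuming $a:A$ we would get $\neg\neg A$, hence $Z\equiv X_1$, hence $f(Z)=f(X_1)=\true$, again a contradiction; therefore $\neg A$, and we return $\inl$. Either way we have produced an element of $\neg A+\neg\neg A$, which is weak excluded middle.

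\emph{Main obstacle.} The routine parts are the transport of negations along an equivalence and the facts that $+$, $\times$, and contractible factors respect equivalences. The genuinely delicate step --- and the conceptual heart of the argument --- is the final case split on $f(Z)$: a definite truth value for $f(Z)$ is incompatible with exactly one of the two conditional equivalences $Z\equiv X_0$ (under $\neg A$) and $Z\equiv X_1$ (under $A$), and this incompatibility is what pins down the correct disjunct of $\neg A+\neg\neg A$. One must also take care to invoke function extensionality only for $\Empty$-valued functions, which is precisely what is needed to treat $\neg A$ and $\neg\neg A$ as propositions inside those equivalences.
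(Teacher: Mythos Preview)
Your proof is correct and follows essentially the same approach as the paper: the ``if'' direction defines $f$ by case analysis on $\neg A + \neg\neg A$, and the ``only if'' direction uses exactly the gadget $Z = \neg A \times X + \neg\neg A \times Y$ (up to relabelling of which witness carries which boolean value) together with a case split on $f(Z)$. Your remark that the ``if'' direction needs no function extensionality, and your explicit identification of the point where $\Empty$-valued function extensionality enters (to make $\neg A$ and $\neg\neg A$ contractible when inhabited), are accurate refinements that the paper leaves implicit.
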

\begin{proof}
  In one direction, suppose weak excluded middle, and define $f:\UU \to
  \bool$ by $f(A)=\false$ if $\neg A$ and $f(A)=\true$ if $\neg\neg A$.
  Then $f(\Empty)=\false$ and $f(\unit)=\true$, so $f$ is strongly
  non-constant.  Extensionality follows from the observation that if
  $A\equiv B$ then $\neg A \leftrightarrow \neg B$ and
  $\neg \neg A \leftrightarrow \neg \neg B$.

  In the other direction, suppose $f:\UU \to \bool$ is extensional,
  and strongly non-constant witnessed by types $X,Y:\UU$ with
  $f(X) \neq f(Y)$.  Suppose without loss of generality that
  $f(X)=\true$ and $f(Y)=\false$.  For any $A:\UU$, define
  $Z = \neg A \times X + \neg\neg A \times Y$.  If $A$, then
  $\neg A\equiv\Empty$ and $\neg\neg A \equiv \unit$
  (using function extensionality), so $Z\equiv Y$
  and $f(Z)=\false$.  Similarly, if $\neg A$, then $Z\equiv X$ and so
  $f(Z)=\true$.  On the other hand, $f(Z)$ must be either $\true$ or
  $\false$ and not both.  If it is $\true$, then it is not $\false$,
  and so $\neg A$; while if it is $\false$, then it is not $\true$,
  and so $\neg\neg A$.
\end{proof}

In
Theorem~\ref{polymorphic:into:booleans} below we reuse Simpson's
argument to establish a similar conclusion for polymorphic functions
into the booleans.

\subsection{Polymorphic maps into the booleans}
\label{sec:ptd-u-to-bool}

A function $f:\prod_{X : \UU} X \to \bool$ is \emph{invariant under
equivalence} if we have $f_Y(e(x)) = f_X(x)$ for any equivalence
$e:X\to Y$ and point $x:X$.
Such a function ``violates parametricity'' if it is non-constant.
Equivalence invariance means that some such violations are literally impossible: for instance, there cannot be a type $X$ with points $x,y:X$ such that $f_X(x) \neq f_X(y)$ if there is an automorphism of $X$ that maps $x$ to $y$.

A violation of constancy \emph{across} types, rather than at a
specific type, is equivalent to weak excluded middle.
\begin{theorem} \label{polymorphic:into:booleans}
  Assuming function extensionality for $\Empty$-valued functions,
  weak excluded middle holds if and only if there is an
  $f:\prod_{X:\UU} X \to \bool$ that is invariant under equivalence,
  together with $X,Y:\UU$ with isolated points $x:X$ and $y:Y$ such
  that $f_X(x) \neq f_Y(y)$.
\end{theorem}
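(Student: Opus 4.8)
The plan is to prove the two directions separately. For the ``only if'' direction I would essentially replay Simpson's argument (Theorem~\ref{thm:simpson}), using Lemma~\ref{lem:isolated} to put the isolated points into a convenient normal form; for the ``if'' direction I would build $f$ out of weak excluded middle.

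\emph{The forward direction: weak excluded middle yields such an $f$.} Given weak excluded middle, I would define $f:\prod_{X:\UU}X\to\bool$ as follows. For $X:\UU$ and $x:X$, apply weak excluded middle to the ``punctured'' type $A_{X,x} := \sum_{y:X}(x\neq y)$, obtaining an element of $\neg A_{X,x} + \neg\neg A_{X,x}$, and set $f_X(x)=\true$ in the left case and $f_X(x)=\false$ in the right. One then checks that $f_X(x)=\true$ holds \emph{if and only if} $\neg A_{X,x}$ (the two cases being mutually exclusive), and that $A_{X,x}$ is carried to a logically equivalent type by any equivalence $X\simeq Y$, since an equivalence reflects the identity type via its one-sided inverse; hence $f$ is invariant under equivalence. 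Finally, $\star:\unit$ is isolated and $A_{\unit,\star}$ is empty, so $f_\unit(\star)=\true$; and $\true:\bool$ is isolated and $A_{\bool,\true}$ is inhabited (by $\false$), so $f_\bool(\true)=\false$. Thus $\unit$, $\bool$ and these two points witness the required non-constancy.

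\emph{The backward direction.} Suppose we are given such $f$, $X$, $Y$ with isolated $x:X$, $y:Y$ and $f_X(x)\neq f_Y(y)$. By Lemma~\ref{lem:isolated} there are equivalences $X\simeq X'+\unit$ and $Y\simeq Y'+\unit$ sending $x$ and $y$ to $\inr(\star)$; invariance under equivalence transports the inequality to $f_{X'+\unit}(\inr(\star))\neq f_{Y'+\unit}(\inr(\star))$, so without loss of generality $f_{X'+\unit}(\inr(\star))=\true$ and $f_{Y'+\unit}(\inr(\star))=\false$. Now for an arbitrary $A:\UU$, set $Z := (\neg A\times X' + \neg\neg A\times Y') + \unit$ and inspect $f_Z(\inr(\star)):\bool$. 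Using function extensionality for $\Empty$-valued functions, if $\neg A$ then $\neg A\simeq\unit$ and $\neg\neg A\simeq\Empty$, so $Z\simeq X'+\unit$ by an equivalence fixing the $\unit$ summand, whence $f_Z(\inr(\star))=\true$; symmetrically, if $\neg\neg A$ then $Z\simeq Y'+\unit$ and $f_Z(\inr(\star))=\false$. Case-splitting on the boolean $f_Z(\inr(\star))$ then yields weak excluded middle: if it is $\true$ then $\neg\neg A$ is impossible, so $\neg A$; if it is $\false$ then $\neg A$ is impossible, so $\neg\neg A$.

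\emph{Where the work is.} The only real care is in the forward direction: one must be sure that $f$, though defined by a case split coming from weak excluded middle, is genuinely a function and genuinely invariant under equivalence. This is handled by observing that $f_X(x)$ is completely determined by the proposition $\neg A_{X,x}$ — so the particular branch chosen by weak excluded middle is irrelevant — and that $\neg A_{X,x}$, being a proposition given function extensionality for $\Empty$-valued functions, is preserved by equivalence. In the backward direction the corresponding subtlety, exactly as in Theorem~\ref{thm:simpson}, is keeping track that the equivalences $Z\simeq X'+\unit$ and $Z\simeq Y'+\unit$ send $\inr(\star)$ to $\inr(\star)$, so that invariance of $f$ can be applied; Lemma~\ref{lem:isolated} supplies everything else.
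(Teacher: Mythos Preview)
Your proof is correct. The forward direction is essentially identical to the paper's (you and the paper swap $\true$/$\false$, which is immaterial). In the backward direction you take a slightly different route: you build the test type as a \emph{sum}
\[
  Z = (\neg A\times X' + \neg\neg A\times Y') + \unit,
\]
a direct pointed analogue of Simpson's construction from Theorem~\ref{thm:simpson}, whereas the paper uses a \emph{product}
\[
  Z = (\unit + \neg A\times X') \times (\unit + \neg\neg A\times Y'),\qquad z=(\inl(\ttt),\inl(\ttt)),
\]
letting one factor collapse to $\unit$ in each case. Both constructions achieve the same thing---a pointed type that becomes pointed-equivalent to $X$ when $\neg A$ and to $Y$ when $A$ (or $\neg\neg A$)---and both need $\Empty$-valued function extensionality for exactly the same reason, namely to make $\neg A$ and $\neg\neg A$ contractible-or-empty. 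Your version is arguably more transparent as a direct reuse of Simpson; the paper's product form, on the other hand, generalises more readily to the non-isolated setting of Theorem~\ref{thm:ptd-u-to-bool}, where each factor is replaced by a $\Sigma$-type over identity types.
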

\begin{proof}
  Assuming weak excluded middle, to show the existence of such an $f$,
  let $X:\UU$ and $x:X$.
  Then use weak excluded middle to decide $\neg(\sum_{x':X}x\neq
  x')+\neg\neg(\sum_{x':X}x\neq x')$.
  In the left case, expressing that there are no other elements in $X$
  than $x$, define $f_X(x)=\false$, and in the right case define
  $f_X(x)=\true$.
  So, for example, $f_\unit(\ttt)=\false$ and $f_\bool(\true)=\true$,
  showing that we constructed a non-constant $f$ as required.

  For the other direction, without loss of generality, $f_X(x)=\true$
  and $f_Y(y)=\false$.
  By assumption, $X$ is equivalent to $\unit+X'$ via an equivalence that sends $x$ to
  $\inl(\ttt)$, and similarly $Y$ is equivalent to $\unit+Y'$ via an
  equivalence that sends $y$ to $\inl(\ttt)$.
  Let $A:\UU$ and define
  \begin{eqnarray*}
    Z & = & (\unit + \neg A \times X') \times (\unit + \neg\neg A
            \times Y'), \\
    z & = & (\inl(\ttt),\inl(\ttt)).
  \end{eqnarray*}
  By the invariance under equivalence of $f$,
  \begin{enumerate}
  \item if $\neg A$ then $Z\equiv X$ via an equivalence
    that sends $z$ to $x$, thus $f_Z(z)=\true$,
  \item if $A$ then $Z\equiv Y$ via an equivalence that sends $z$ to
    $y$, thus $f_Z(z)=\false$.
  \end{enumerate}
  The contrapositives of these two implications are respectively
  \begin{eqnarray*}
    f_Z(z)=\false & \to & \neg\neg A,\\
    f_Z(z)=\true & \to & \neg A.
  \end{eqnarray*}
  Hence we can decide $\neg A$ by case analysis on the value of $f_Z(z)$.
\end{proof}

Provided our type theory includes propositional truncations, we can
dispense with isolatedness as in Theorem~\ref{thm:identity-proptrunc},
assuming the types $x=x$ and $y=y$ are propositions.
\begin{theorem}\label{thm:ptd-u-to-bool}
  In a type theory with propositional truncations, weak excluded
  middle holds if and only if there is an
  $f:\prod_{X:\UU} X \to \bool$ that is invariant under equivalence,
  together with $X,Y:\UU$ with $x:X$ and $y:Y$ such that
  $f_X(x) \neq f_Y(y)$, where the types $x=x$ and $y=y$ are
  propositions.
\end{theorem}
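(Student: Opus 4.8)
The plan is to fuse the argument of Theorem~\ref{polymorphic:into:booleans} with the truncation device from the proof of Theorem~\ref{thm:identity-proptrunc} that replaces the isolatedness hypothesis; recall that propositional truncation implies function extensionality~\cite{keca:toappear}, so that $\neg A$, $\neg\neg A$ and weak excluded middle are propositions and the biconditional is well-posed. For the forward direction, assume weak excluded middle and define $f:\prod_{X:\UU}X\to\bool$ exactly as in the first half of the proof of Theorem~\ref{polymorphic:into:booleans}: given $X:\UU$ and $x:X$, decide $\neg(\sum_{x':X}x\neq x')+\neg\neg(\sum_{x':X}x\neq x')$ and set $f_X(x):=\false$ in the left case, $f_X(x):=\true$ in the right case. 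Since $\neg B+\neg\neg B$ is a proposition this is well defined, and $f$ is invariant under equivalence because any equivalence $e:X\to Y$ induces $\sum_{x':X}x\neq x'\simeq\sum_{y':Y}e(x)\neq y'$. Then $f_\unit(\ttt)=\false\neq\true=f_\bool(\true)$, so one may take $X:=\unit$, $x:=\ttt$, $Y:=\bool$, $y:=\true$; here $\ttt=\ttt$ and $\true=\true$ are propositions because $\unit$ and $\bool$ are sets.

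For the converse, suppose given such $f$, $X$, $Y$, $x$, $y$, and assume without loss of generality $f_X(x)=\true$ and $f_Y(y)=\false$. The key preliminary observation is that if $x=x$ is a proposition then $x=x'$ is a proposition for every $x':X$: given $r:x=x'$, concatenation with $r$ is an equivalence $(x=x)\simeq(x=x')$, so $x=x'$ is a proposition whenever it is inhabited, hence always; consequently $\sum_{x':X}\trunc{x=x'}$ is contractible, being equivalent to the based path space $\sum_{x':X}(x=x')$, and similarly for $y$. Now, for an arbitrary $A:\UU$, set
\[
  Z := \left(\sum_{x':X}\trunc{(x=x')+\neg A}\right)\times\left(\sum_{y':Y}\trunc{(y=y')+A}\right),
\]
with distinguished point $z:=\bigl((x,|\inl(\refl_x)|),(y,|\inl(\refl_y)|)\bigr)$. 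If $\neg A$ holds, the first factor is a $\Sigma$-type of inhabited propositions, hence equivalent to $X$ via $\pr_1$, while each summand of the second satisfies $\trunc{(y=y')+A}\simeq\trunc{y=y'}$, so the second factor is contractible; thus $Z\simeq X$ by an equivalence sending $z$ to $x$, and invariance of $f$ gives $f_Z(z)=\true$. Symmetrically, if $A$ holds then $Z\simeq Y$ by an equivalence sending $z$ to $y$, so $f_Z(z)=\false$. Taking contrapositives, $f_Z(z)=\true\to\neg A$ and $f_Z(z)=\false\to\neg\neg A$; since the boolean $f_Z(z)$ is either $\true$ or $\false$, we obtain $\neg A+\neg\neg A$ — weak excluded middle.

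The main obstacle, and the only point where the hypotheses on $x=x$ and $y=y$ enter, is obtaining the equivalences $Z\simeq X$ and $Z\simeq Y$ in the cases $\neg A$ and $A$: each relies on $\sum_{x':X}\trunc{x=x'}$ (respectively $\sum_{y':Y}\trunc{y=y'}$) being contractible, which fails in general without assuming $x=x$ (respectively $y=y$) is a proposition — e.g.\ it is $S^1$ when $X=S^1$ with $x$ the basepoint. Everything else is the bookkeeping already carried out for Theorems~\ref{polymorphic:into:booleans} and~\ref{thm:identity-proptrunc}.
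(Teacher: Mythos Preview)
Your proof is correct and follows essentially the same approach as the paper's: the forward direction is identical, and the reverse direction uses the same product-of-truncated-path-spaces construction, with the only (harmless) difference that you place $A$ rather than $\neg\neg A$ in the second factor. The contrapositive bookkeeping adjusts accordingly and yields weak excluded middle just as in the paper.
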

\begin{proof}
  Assuming weak excluded middle, the existence of such an $f$ is shown
  as in the proof of Theorem~\ref{polymorphic:into:booleans}.

  For the other direction, without loss of generality, $f_X(x)=\true$
  and $f_Y(y)=\false$.
  Note that since $x=x$ and $y=y$ are propositions, so are $x=x'$ and $y=y'$ for any $x':X$ and $y':Y$, since as soon as they have a point they are equivalent to $x=x$ and $y=y$ respectively.
  Let $A:\UU$ and define
  \begin{eqnarray*}
    Z & = & \left(\sum_{x':X}\bracket{x=x'}\vee\neg A\right) \times
            \left(\sum_{y':Y}\bracket{y=y'}\vee\neg\neg A\right), \\
    z & = & ((x,|\inl(\refl)|),(y,|\inl(\refl)|)).
  \end{eqnarray*}
  By invariance under equivalence of $f$, we have the following.
  \begin{enumerate}
  \item If $\neg A$ then $Z\equiv X$ via an equivalence that sends $z$
    to $x$, thus $f_Z(z)=\true$.  This works because the left factor
    of $Z$ becomes equivalent to $X$, and the right factor equivalent
    to $\unit$ by the assumptions that $y=y$ is a proposition and
    $\neg A$.
  \item Similarly, if $A$ then $Z\equiv Y$ via an equivalence that
    sends $z$ to $y$, thus $f_Z(z)=\false$, now using the fact that
    $x=x$ is a proposition.
  \end{enumerate}
  The contrapositives of these two implications are respectively
  \begin{eqnarray*}
    f_Z(z)=\false & \to & \neg\neg A, \\
    f_Z(z)=\true & \to & \neg A.
  \end{eqnarray*}
  Hence we can decide $\neg A$ by case analysis on the value of $f_Z(z)$.
\end{proof}

\begin{remark}
  In a type theory with pushouts, the assumptions that $x=x$ and $y=y$
  are propositions can be removed by using the join $(x=x') * \neg A$
  instead of the disjunction $\bracket{x=x'}\vee \neg A$ in the left
  factor of $Z$, and similarly for the right factor of $Z$.  (The
  \emph{join} $B*C$ of types $B$ and $C$ is the pushout of $B$ and $C$
  under $B\times C$.)  This works since joining with an empty type is
  the identity, while joining with a contractible type gives a
  contractible result; see Theorem~\ref{thm:ptd-decomp} below for
  details.
  Indeed, the join of two propositions is their disjunction, by~\cite[Lemma 2.4]{rijke:join}; but the version using joins does not quite subsume the one using disjunctions, since if joins are not already assumed to exist, we do not know how to show that the disjunction of two propositions is their join.
\end{remark}

\subsection{Decompositions of the universe}
\label{sec:decomp}

Theorem~\ref{thm:simpson} can be
interpreted as saying that the universe $\UU$ cannot be decomposed
into two disjoint inhabited parts without weak excluded middle.  In
fact, disjointness of the parts is not necessary.  All that is needed
is that both parts be proper, i.e.\ not the whole of $\UU$:

\begin{theorem}
  In a type theory with propositional truncation and function
  extensionality for $\Empty$-valued functions, suppose we have
  equivalence-invariant $P,Q:\UU\to\UU$ such that for all $Z:\UU$ we
  have $P(Z) \vee Q(Z)$, and that we have types $X$ and $Y$ such that
  $\neg P(X)$ and $\neg Q(Y)$.  Then weak excluded middle holds.
\end{theorem}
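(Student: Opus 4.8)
The plan is to adapt Simpson's argument from the proof of Theorem~\ref{thm:simpson}. The one genuinely new point is that, although $P(Z)$ and $Q(Z)$ are not assumed to be disjoint (nor even proposition-valued), the statement $\neg A + \neg\neg A$ we are aiming at \emph{is} a proposition, so we can still eliminate out of the truncated disjunction $P(Z)\vee Q(Z)$ into it.

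In detail: let $A:\UU$ be arbitrary, and put $Z = \neg A\times X + \neg\neg A\times Y$, exactly as in the proof of Theorem~\ref{thm:simpson}. Using function extensionality for $\Empty$-valued functions, $\neg A$ and $\neg\neg A$ are propositions, and moreover they are disjoint (from $\neg A$ together with $\neg\neg A$ one derives $\Empty$), so $\neg A+\neg\neg A$ is a proposition. As in Theorem~\ref{thm:simpson}, if $\neg A$ holds then $\neg A\equiv\unit$ and $\neg\neg A\equiv\Empty$, so $Z\equiv X$, whence $\neg P(Z)$ by equivalence-invariance of $P$ and the hypothesis $\neg P(X)$; symmetrically, if $\neg\neg A$ holds then $Z\equiv Y$, whence $\neg Q(Z)$ by equivalence-invariance of $Q$ and the hypothesis $\neg Q(Y)$. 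Taking contrapositives, and using $\neg\neg\neg A\leftrightarrow\neg A$, we obtain maps $P(Z)\to\neg\neg A$ and $Q(Z)\to\neg A$, hence a single map $P(Z)+Q(Z)\to\neg A+\neg\neg A$.

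Since the target $\neg A+\neg\neg A$ is a proposition, this map factors through $P(Z)\vee Q(Z)=\trunc{P(Z)+Q(Z)}$ by the universal property of propositional truncation. Applying the resulting function to the hypothesis $P(Z)\vee Q(Z)$ yields $\neg A+\neg\neg A$; as $A$ was arbitrary, this is weak excluded middle. Everything here is routine apart from the two equivalences $Z\equiv X$ and $Z\equiv Y$ under the respective hypotheses, which are the same computations as in Theorem~\ref{thm:simpson}. The only place where the weakened hypotheses (disjunction in place of disjoint sum, and $P,Q$ not assumed proposition-valued) could have caused difficulty is the elimination out of the truncation, and that goes through precisely because $\neg A+\neg\neg A$ is a proposition, so I do not anticipate a real obstacle.
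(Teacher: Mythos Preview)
Your proof is correct and is essentially the same as the paper's: both use Simpson's type $Z = \neg A \times X + \neg\neg A \times Y$, derive $P(Z)\to\neg\neg A$ and $Q(Z)\to\neg A$, and then eliminate out of the truncated disjunction into the proposition $\neg A + \neg\neg A$. The only cosmetic difference is that the paper assumes $A$ (rather than $\neg\neg A$) to obtain $Z\equiv Y$, but this leads to the same contrapositive $Q(Z)\to\neg A$ and does not affect the argument.
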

\begin{proof}
  For any $A:\UU$, let $Z = \neg A \times X + \neg\neg A \times Y$ as
  in Simpson's proof.  If $A$, then $Z\equiv Y$, and so $\neg Q(Z)$;
  thus $Q(Z) \to \neg A$.  But if $\neg A$, then $Z\equiv X$, and so
  $\neg P(Z)$; thus $P(Z) \to \neg\neg A$.  Hence the assumed $P(Z)
  \vee Q(Z)$ implies $\neg A \vee \neg \neg A$, which is equivalent to
  $\neg A + \neg\neg A$ since $\neg A$ and $\neg\neg A$ are (by
  function extensionality) disjoint propositions.
\end{proof}

The proof of Theorem~\ref{thm:ptd-u-to-bool} can be similarly adapted.

\begin{theorem}\label{thm:ptd-decomp}
  In a type theory with propositional truncation and $\Empty$-valued
  function extensionality, suppose we have $P,Q:\prod_{X:\UU} X \to
  \UU$ that are invariant under equivalence, i.e.\ if $X\equiv Y$ by an
  equivalence sending $x:X$ to $y:Y$, then $P_X(x) \equiv P_Y(y)$, and
  likewise for $Q$.  Suppose also that for all $Z:\UU$ and $z:Z$ we
  have $P_Z(z) \vee Q_Z(z)$, and types $X,Y$ with points $x:X$ and
  $y:Y$ such that $\neg P_X(x)$ and $\neg Q_Y(y)$.  Finally, suppose
  either that our type theory has pushouts or that the types $x=x$ and
  $y=y$ are propositions.  Then weak excluded middle holds.
\end{theorem}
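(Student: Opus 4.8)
The plan is to re-run the proof of Theorem~\ref{thm:ptd-u-to-bool} almost verbatim, with the single hypothesis $P_Z(z)\vee Q_Z(z)$ taking over the role that ``$f_Z(z):\bool$ is $\true$ or $\false$'' played there. So fix an arbitrary $A:\UU$; the goal is to prove $\neg A+\neg\neg A$. As in Theorem~\ref{thm:ptd-u-to-bool}, note first that since $x=x$ and $y=y$ are propositions, so are $x=x'$ and $y=y'$ for all $x':X$, $y':Y$, and set
\[
Z = \left(\sum_{x':X}\bracket{x=x'}\vee\neg A\right)\times\left(\sum_{y':Y}\bracket{y=y'}\vee\neg\neg A\right),\qquad z = \bigl((x,|\inl(\refl)|),(y,|\inl(\refl)|)\bigr).
\]

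Next come the two key equivalences. If $\neg A$ holds, then $\neg A$ is a contractible proposition and $\neg\neg A\equiv\Empty$, so (using that $y=y'$ is a proposition, hence $\bracket{y=y'}\vee\neg\neg A\equiv (y=y')$) the right factor of $Z$ is a singleton and the left factor is equivalent to $X$, the resulting equivalence $Z\equiv X$ carrying $z$ to $x$. Dually, if $A$ holds then the left factor of $Z$ becomes a singleton (using $x=x'$ a proposition) and the right factor is equivalent to $Y$, giving an equivalence $Z\equiv Y$ carrying $z$ to $y$. Invariance under equivalence of $P$ and $Q$ then gives, in the first case, $P_Z(z)\equiv P_X(x)$, and in the second, $Q_Z(z)\equiv Q_Y(y)$; combined with $\neg P_X(x)$ and $\neg Q_Y(y)$ this yields $\neg A\to\neg P_Z(z)$ and $A\to\neg Q_Z(z)$, that is, by contraposition,
\[
P_Z(z)\to\neg\neg A \qquad\text{and}\qquad Q_Z(z)\to\neg A.
\]
These two implications assemble into a map $P_Z(z)+Q_Z(z)\to\neg\neg A+\neg A$. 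By $\Empty$-valued function extensionality, $\neg A$ and $\neg\neg A$ are disjoint propositions, so $\neg\neg A+\neg A$ is a proposition; hence the map factors through the truncation to give $P_Z(z)\vee Q_Z(z)\to\neg\neg A+\neg A$. Feeding in the hypothesis $P_Z(z)\vee Q_Z(z)$ yields $\neg\neg A+\neg A$, i.e.\ weak excluded middle for $A$; as $A$ was arbitrary, weak excluded middle holds.

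It remains to cover the alternative where, instead of $x=x$ and $y=y$ being propositions, the type theory has pushouts; here one replaces each disjunction $\bracket{x=x'}\vee\neg A$ by the join $(x=x')*\neg A$ and $\bracket{y=y'}\vee\neg\neg A$ by $(y=y')*\neg\neg A$, taking $z$ to use the corresponding join constructor on the basepoint side (cf.\ the remark following Theorem~\ref{thm:ptd-u-to-bool}). The facts needed are $B*\Empty\equiv B$ and that $B*C$ is contractible whenever $C$ is contractible (joining is a pushout over a projection that becomes an equivalence when $C$ is contractible). With these, if $\neg A$ holds the left factor $\sum_{x':X}\bigl((x=x')*\neg A\bigr)\equiv\sum_{x':X}\unit\equiv X$ and the right factor is contractible; if $A$ holds the roles swap, now using that $\sum_{x':X}(x=x')$ is always contractible. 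This is exactly where the propositionality assumptions become dispensable; the rest of the argument is unchanged.

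I expect the only genuine obstacle to be bookkeeping: checking that the equivalences $Z\equiv X$ and $Z\equiv Y$ are \emph{pointed} in the right way — that they carry $z$ to $x$, respectively to $y$ — since it is precisely this that licenses substituting $P_X(x)$ and $Q_Y(y)$ into the equivalence-invariance hypotheses; in the pushout variant one must in addition track the basepoint through the equivalences $B*\Empty\equiv B$ and ``$B*C$ contractible''. All of this is parallel to the corresponding step in the proof of Theorem~\ref{thm:ptd-u-to-bool} and presents no new difficulty.
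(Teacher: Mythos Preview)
Your proposal is correct and follows essentially the same argument as the paper. The only presentational difference is that the paper writes out the pushout (join) case explicitly and leaves the disjunction case to the reader, whereas you do the reverse; the construction of $Z$, the two pointed equivalences, and the final step from $P_Z(z)\vee Q_Z(z)$ to $\neg A + \neg\neg A$ are the same in both.
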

\begin{proof}
  For variety in contrast to Theorem~\ref{thm:ptd-u-to-bool},
  suppose we have pushouts; we leave the other case to the reader.
  Let $A:\UU$ and define
  \begin{eqnarray*}
    Z & = & \left(\sum_{x':X} (x=x') * \neg A\right) \times
            \left(\sum_{y':Y} (y=y')* \neg\neg A\right), \\
    z & = & ((x,\inl(\refl)),(y,\inl(\refl))).
  \end{eqnarray*}
  Then if $A$, $\neg A\equiv \Empty$, so $(x=x')*\neg A \equiv (x=x')$,
  and thus the first factor of $Z$ is equivalent to $\sum_{x':X}
  (x=x')$, which is a ``singleton'' or ``based path space'' and hence
  equivalent to $\unit$.  On the other hand (still assuming $A$),
  $\neg\neg A\equiv \unit$, so $(y=y')*\neg\neg A \equiv \unit$, and
  thus the right factor of $Z$ is equivalent to $\sum_{y':Y}\unit$ and
  hence to $Y$.  Thus, $A$ implies $Z\equiv Y$, and it is easy to check
  that this equivalence sends $z$ to $y$.  Hence $A \to \neg Q_Z(z)$,
  and so $Q_Z(z) \to \neg A$.  A dual argument shows that $\neg A \to
  \neg P_Z(z)$ and thus $P_Z(z) \to \neg\neg A$, so the assumption
  $P_Z(z) \vee Q_Z(z)$ gives weak excluded middle.
\end{proof}

Since a function $\prod_{X:\UU} X \to B$, for any fixed $B$, is the
same as a function $\left(\sum_{X:\UU} X\right) \to B$, we can
interpret Theorem~\ref{thm:ptd-decomp} as saying that the universe $\sum_{X : \UU} X$ of
\emph{pointed types} also cannot be decomposed into two proper parts
without weak excluded middle.

The results discussed so far illustrate that different violations of
parametricity have different proof-theoretic strength: some violations
are impossible, while others imply varying amounts of excluded middle.

\section{Classical axioms from automorphisms of the universe}
\label{classical:automorphisms}

There have been attempts to apply parametricity to show that the only
automorphism of a universe of types is the identity.  Nicolai Kraus
observed in the HoTT mailing list~\cite{automorphisms:kraus} that,
assuming univalence, automorphisms of a universe $\UU$ living in a
universe $\mathcal{V}$ correspond to elements of the loop space%
\footnote{The \emph{loop space} $\Omega(X,x)$ of a type $X$ at a point $x:X$ is the identity type $x=x$; see~\cite[\textsection 2.1]{hottbook}.}
$\Omega(\mathcal{V},\UU)$, while elements of the higher loop space $\Omega^2(\mathcal{V},\UU)$
correspond to ``polymorphic automorphisms'' $\prod_{X:\UU} X\equiv X$, which are at least as strong as polymorphic endomaps.
In particular, nontrivial elements of $\Omega^2(\mathcal{V},\UU)$ imply violations of
parametricity for $\prod_{X : \UU} X \to X$.  This suggests that
parametricity may play a role in automorphisms of the universe.

We are not aware of a proof that parametricity implies that the only
automorphism of the universe is the identity. However, in the spirit
of the above development, we can show that automorphisms with specific
properties imply excluded middle.  First, however, we observe that if we do have excluded middle then we can construct various nontrivial automorphisms of the universe.

\subsection{Automorphisms from excluded middle}
\label{sec:autom-that-do}

The simplest automorphism of the universe is defined as follows.
By \emph{propositional extensionality} we mean that any two logically equivalent propositions are equal.
(This follows from propositional univalence, i.e.\ univalence asserted only for propositions.
The converse holds at least assuming function extensionality; we do not know whether this assumption is necessary.)

\begin{theorem}\label{thm:swap10}
  Assuming excluded middle, function extensionality, and propositional extensionality, there is an automorphism $f:\UU\equiv \UU$ such that $f(\unit)\equiv\Empty$.
\end{theorem}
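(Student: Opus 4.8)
The plan is to build the automorphism $f : \UU \equiv \UU$ that ``swaps'' $\unit$ and $\Empty$ while fixing every other type up to equivalence. The key observation is that, assuming excluded middle and propositional extensionality, there are exactly two propositions up to equality, namely $\Empty$ and $\unit$ (since any proposition $P$ satisfies $P + \neg P$, and in the first case $P \leftrightarrow \unit$, in the second $P \leftrightarrow \Empty$, so by propositional extensionality $P = \unit$ or $P = \Empty$). More usefully, for an arbitrary type $X$ I can detect whether $X$ is ``equivalent to $\unit$'', ``equivalent to $\Empty$'', or neither: the type $\isProp(X)$ is itself a proposition (given function extensionality), so excluded middle decides it; if $X$ is a proposition, excluded middle applied to $X$ itself tells us whether $X \equiv \unit$ or $X \equiv \Empty$. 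So define $f(X)$ by cases: if $X$ is a proposition that holds, set $f(X) = \Empty$; if $X$ is a proposition that fails, set $f(X) = \unit$; otherwise set $f(X) = X$.

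The next step is to check that $f$ is an equivalence, for which it suffices (as noted in the introduction) to exhibit a two-sided inverse. I claim $f$ is its own inverse up to equality of outputs: $f(f(X)) = X$ for all $X$. Indeed, if $X$ is not a proposition then $f(X) = X$ and we are done. If $X$ is a proposition that holds, then $f(X) = \Empty$, which is a proposition that fails, so $f(f(X)) = \unit$; but $X \leftrightarrow \unit$, so by propositional extensionality $X = \unit = f(f(X))$. Symmetrically, if $X$ is a proposition that fails, $f(X) = \unit$, a proposition that holds, so $f(f(X)) = \Empty = X$ by propositional extensionality. Hence $f \comp f = \id_\UU$ pointwise, and by function extensionality $f$ is an equivalence $\UU \equiv \UU$. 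Finally, $f(\unit) = \Empty$ since $\unit$ is a proposition that holds, so also $f(\unit) \equiv \Empty$ as required.

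The main obstacle is care with the case analysis and making sure the definition of $f$ is actually well-defined as a function of $X : \UU$ — in particular, that the cases ``$X$ is a proposition'' versus ``$X$ is not a proposition'' are decided by a genuine term of $\isProp(X) + \neg\isProp(X)$, which is why function extensionality is needed (to make $\isProp(X)$, and hence its decidability instance, behave as a proposition so that the case split is coherent). One should also note that no univalence is used: the construction only produces an automorphism in the weak sense of an equivalence $\UU \equiv \UU$, and the values $f(\unit)$, $f(\Empty)$ are literally equal to $\Empty$, $\unit$ on the nose, while other types are only fixed up to the identity. A remark worth adding is that iterating variants of this idea (swapping more complicated propositions, or acting on $\bool$-many summands) yields a wealth of nontrivial automorphisms, foreshadowing the converse-direction results of the next subsections.
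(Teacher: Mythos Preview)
Your proof is correct and follows essentially the same approach as the paper: decide whether $X$ is a proposition (using that $\isProp(X)$ is a proposition under function extensionality, so excluded middle applies), negate it if so, and leave it alone otherwise, then check that $f$ is self-inverse. The only cosmetic difference is that the paper sets $f(X) = \neg X$ directly for propositions $X$ and invokes $\neg\neg X = X$ (via excluded middle and propositional extensionality) rather than performing your second case split on whether $X$ holds; the two definitions agree by propositional extensionality.
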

\begin{proof}
Given a type
$X$, we use excluded middle to decide if it is a proposition (this works because under function extensionality, being a proposition is itself a proposition).  If it
is, we define $f(X)=\neg X$, and otherwise we define $f(X)=X$.
Assuming propositional extensionality and excluded middle,
we have $\neg\neg X = X$ for any proposition; thus $f(f(X))=X$ whether $X$ is a proposition or not, and hence $f$ is a self-inverse equivalence.
\end{proof}


We can try to construct other automorphisms of the universe by permuting some other subclass of types.
For instance, if we have propositional truncation, then given any two non-equivalent types $A$ and $B$, excluded middle implies that for any type $X$ we have $\trunc{X=A} + \trunc{X=B} + (X\neq A \wedge X\neq B)$, so that the universe $\UU$ decomposes as a sum $\UU_A + \UU_B + \UU_{\neq A,B}$, where
\begin{gather*} 
  \UU_A = \sum_{X:\UU} \trunc{X=A}, \qquad
  \UU_B = \sum_{X:\UU} \trunc{X=B}, \qquad
  \UU_{\neq(A,B)} = \sum_{X:\UU} (X\neq A \wedge X\neq B).
\end{gather*}
(This requires function extensionality for $X\neq A$ and $X\neq B$ to be propositions, but not univalence.)
Thus, if $\UU_A \equiv \UU_B$ we can switch those two summands to produce an automorphism of $\UU$:

\begin{theorem}\label{thm:uuequiv-aut}
  Assuming function extensionality and excluded middle, if $A\not\equiv B$ and $\UU_A \equiv \UU_B$, then there is an automorphism $f:\UU\equiv \UU$ such that $\trunc{f(A)=B}$, hence $f\neq\id$.
\end{theorem}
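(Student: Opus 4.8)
The plan is to transport, along the decomposition $\UU \equiv \UU_A + \UU_B + \UU_{\neq(A,B)}$ established in the paragraph before the theorem, the automorphism of $\UU_A + \UU_B + \UU_{\neq(A,B)}$ that interchanges the first two summands via the hypothesised equivalence $\UU_A \equiv \UU_B$ and its inverse, while fixing the third summand.

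In detail: by the preceding paragraph the ``first projection'' map
\[
  \psi : (\UU_A + \UU_B) + \UU_{\neq(A,B)} \longrightarrow \UU ,
\]
which case-splits and then applies $\pr_1$ on each of the three summands, is an equivalence (excluded middle gives surjectivity, and each fibre is contractible because for fixed $X$ the three predicates $\trunc{X=A}$, $\trunc{X=B}$, $X\neq A\wedge X\neq B$ are mutually exclusive propositions, using $A\not\equiv B$ and function extensionality). Let $g:\UU_A\to\UU_B$ be the given equivalence, with two-sided inverse $g^{-1}$, and define $s:\UU_A+\UU_B\to\UU_A+\UU_B$ by $s(\inl(u))=\inr(g(u))$ and $s(\inr(v))=\inl(g^{-1}(v))$. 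As $g$ and $g^{-1}$ are mutually inverse, $s$ is its own two-sided inverse, hence an equivalence. Now put
\[
  f \;=\; \psi \comp \bigl(s + \idfunc[\UU_{\neq(A,B)}]\bigr) \comp \psi^{-1} \;:\; \UU \longrightarrow \UU ,
\]
a composite of equivalences, hence an automorphism of $\UU$.

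It remains to produce a witness of $\trunc{f(A)=B}$, from which $f\neq\id$ follows. First note $\neg\trunc{A=B}$: from a path $A=B$ one obtains $A\equiv B$ (transport), contradicting $A\not\equiv B$, so $A\neq B$, and then $\trunc{A=B}\to\Empty$ since $\Empty$ is a proposition. Consequently $\psi^{-1}(A)$ cannot lie in the $\UU_B$ summand (that would force a point of $\trunc{A=B}$) nor in the $\UU_{\neq(A,B)}$ summand (that would force $A\neq A$), so $\psi^{-1}(A)=\inl(\inl(w))$ for some $w:\UU_A$ with $\pr_1(w)=A$. Applying $s+\idfunc$ sends this to $\inl(\inr(g(w)))$, and then $\psi$ to $\pr_1(g(w))$; thus $f(A)=\pr_1(g(w))$, while $\pr_2(g(w)):\trunc{\pr_1(g(w))=B}$ is exactly the desired witness of $\trunc{f(A)=B}$. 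Finally, were $f=\id$, applying both sides at $A$ would give $f(A)=A$ and hence $\trunc{A=B}$, a contradiction; so $f\neq\id$. I do not expect a genuine obstacle here — the argument is bookkeeping with coproducts — and the one point that needs care, namely that $\psi$ really is an equivalence (i.e.\ that the three-way decomposition of $\UU$ is honest), is precisely what the paragraph preceding the theorem supplies.
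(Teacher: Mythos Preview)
Your proposal is correct and follows essentially the same approach as the paper: use the decomposition $\UU \equiv \UU_A + \UU_B + \UU_{\neq(A,B)}$ from the preceding paragraph, swap the first two summands via the given equivalence, and read off $\trunc{f(A)=B}$ from the definition of $\UU_B$. The paper's proof is two sentences; yours is simply a careful unpacking of the same construction, including the bookkeeping for why $A$ lands in the $\UU_A$ summand and why $f\neq\id$.
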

\begin{proof}
  We use the above decomposition and the given equivalence $\UU_A \equiv \UU_B$ to produce $f$.
  And since $f$ maps $\UU_A$ to $\UU_B$, by definition of $\UU_B$ we have $\trunc{f(A)=B}$.
\end{proof}

This leads to the question, when can we have $\UU_A \equiv \UU_B$ but $A \not\equiv B$?
Theorem~\ref{thm:swap10} is the simplest example of this: assuming propositional extensionality, both $\UU_{\Empty}$ and $\UU_{\unit}$ are contractible, hence equivalent to $\unit$.
More generally, let us call a type $X$ \emph{rigid} if $\UU_X$ is contractible; then we have:

\begin{theorem}
  Assuming function extensionality and excluded middle, if $A$ and $B$ are rigid types with $A\not\equiv B$, then there is an automorphism $f:\UU\equiv \UU$ such that $f(A)\equiv B$.
\end{theorem}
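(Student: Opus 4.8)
The plan is to deduce this from Theorem~\ref{thm:uuequiv-aut} and then upgrade its conclusion from a truncated equality to an honest equivalence. First I would observe that if $A$ and $B$ are rigid, then $\UU_A$ and $\UU_B$ are both contractible, hence both equivalent to $\unit$, and therefore $\UU_A \equiv \UU_B$. Since we are also given $A \not\equiv B$, the hypotheses of Theorem~\ref{thm:uuequiv-aut} are met, and it produces an automorphism $f : \UU \equiv \UU$ with $\trunc{f(A)=B}$.

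It then remains to strengthen $\trunc{f(A)=B}$ to an actual equivalence $f(A)\equiv B$. For this I would exploit contractibility of $\UU_B = \sum_{X:\UU}\trunc{X=B}$: the witness $w : \trunc{f(A)=B}$ gives a point $(f(A),w) : \UU_B$, while $(B,|\refl_B|) : \UU_B$ as well; by contractibility these two points are equal in $\UU_B$, and applying $\pr_1$ yields a genuine identification $f(A)=B$ in $\UU$. Transporting along this identification (equivalently, applying $\operatorname{idtoeqv}$, which needs no univalence) gives $f(A)\equiv B$, as required.

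I do not expect a real obstacle here, since the automorphism itself is exactly the one built in Theorem~\ref{thm:uuequiv-aut} (swap the summands $\UU_A$ and $\UU_B$ in the decomposition $\UU \equiv \UU_A + \UU_B + \UU_{\neq(A,B)}$ via $\UU_A \equiv \unit \equiv \UU_B$). The one point that deserves care is that the truncated-to-untruncated upgrade goes through precisely because rigidity gives the stronger fact that $\UU_B$ is \emph{contractible} — mere decomposability of the universe around $A$ and $B$ would not be enough to promote $\trunc{f(A)=B}$ to $f(A)=B$. As in the earlier theorems, one should also note that only function extensionality, excluded middle, propositional truncation, and transport along identities of types are used, and not the univalence axiom.
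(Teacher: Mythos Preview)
Your proposal is correct and matches the paper's proof essentially verbatim: the paper likewise invokes Theorem~\ref{thm:uuequiv-aut} and then notes that the conclusion strengthens from $\trunc{f(A)=B}$ to $f(A)\equiv B$ precisely because $\UU_B$ is contractible. Your explicit argument for the upgrade (comparing $(f(A),w)$ with $(B,|\refl_B|)$ inside the contractible $\UU_B$ and projecting) is exactly the content the paper leaves implicit in the phrase ``since $\UU_B$ is contractible.''
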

\begin{proof}
  This follows from Theorem \ref{thm:uuequiv-aut}.
  In the rigid case we get the stronger conclusion that $f(A)\equiv B$, since $\UU_B$ is contractible.
\end{proof}

More generally, under excluded middle any permutation of the rigid types yields an automorphism of the universe.

If we assume UIP, then \emph{every} type is rigid, so that with UIP and excluded middle there are plenty of automorphisms of the universe.
If we instead assume univalence --- as we will do for the rest of this subsection --- most types are not rigid.
For instance, any type with two distinct isolated points, such as $\mathbb{N}$, is not rigid, since we can swap the isolated points to give a nontrivial automorphism and hence a nontrivial equality in $\UU_X$.
In particular, if excluded middle holds and $X$ is a \emph{set} (i.e.\ its identity types are all propositions), then all points of $X$ are isolated.
Thus, with excluded middle and univalence, no set with more than one element (i.e.\ with points $x,y:X$ such that $x\neq y$) is rigid.

However, there exist types that are \emph{connected} (i.e.\ $\trunc{X}$ and  $\prod_{x,y:X} \trunc{x=y}$), but that are not trivial; indeed, as remarked above, $\UU_A$ is such a type.
Moreover, if we also assume higher inductive types, then from any group $G$ that is a set we can construct a connected type $BG$ such that $\Omega(BG) \equiv G$ \cite[\textsection 3.2]{lf:emspaces}.

This leads us to ask, when is $BG$ rigid for a set-group $G$?
Since $BG$ is a 1-type (i.e.\ its identity types are all sets), $\UU_{BG}$ is a 2-type (i.e.\ its identity types are all 1-types).
Hence it is contractible as soon as its loop space is connected and its double loop space is contractible.
In general, the connected components of $\Omega(\UU_{BG})$ are the \emph{outer automorphisms} of $G$ (equivalence classes of automorphisms of $G$ modulo conjugation), while $\Omega^2(\UU_{BG})$ is the \emph{center} of $G$ (the subgroup of elements that commute with everything).
A group with trivial outer automorphism group and trivial center is sometimes known as a \emph{complete} group (though there is no apparent relation to any topological notion of completeness), and there are plenty of examples.

For instance, the symmetric group $S_n$ is complete in this sense except when $n=2$ or $6$.
Thus, $BS_n$ is rigid for $n\notin \{2,6\}$.
(Note also that $BS_n$ can be constructed without higher inductive types --- but with univalence --- as $\UU_{[n]}$, where $[n]$ is a finite $n$-element type, although of course this type only lives in a larger universe $\mathcal{V}$.)
In particular, assuming univalence and excluded middle, there are countably infinitely many rigid types, and hence \emph{uncountably} many nontrivial automorphisms of $\UU$ (one induced by every permutation of the types $BS_n$ for $n\notin \{2,6\}$).

This does not exhaust the potential automorphisms of $\UU$.
For instance, we have:

\begin{theorem}\label{thm:swap-rigid-prod}
  Let $X$ be an $n$-type for some $n\ge -1$, and let $A$ and $B$ be $n$-connected rigid types such that $X\times A \not\equiv X\times B$.
  Then assuming univalence and excluded middle, there is an automorphism $f:\UU\equiv\UU$ such that $\trunc{f(X\times A) = (X\times B)}$.
\end{theorem}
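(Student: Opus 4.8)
The plan is to reduce the statement to Theorem~\ref{thm:uuequiv-aut}, applied with its ``$A$'' and ``$B$'' instantiated to $X\times A$ and $X\times B$ respectively. The hypothesis $X\times A\not\equiv X\times B$ is exactly what that theorem requires in place of $A\not\equiv B$, so all that remains is to prove $\UU_{X\times A}\equiv\UU_{X\times B}$; I would obtain this by showing $\UU_{X\times A}\equiv\UU_X\equiv\UU_{X\times B}$ and composing. (The composite need not send $X\times A$ to $X\times B$, but Theorem~\ref{thm:uuequiv-aut} does not ask for that.) So everything comes down to one lemma: \emph{if $X$ is an $n$-type and $C$ is an $n$-connected rigid type, then $\UU_{X\times C}\equiv\UU_X$}; this is then invoked once with $C = A$ and once with $C = B$.

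For the lemma I would first record two facts. Rigidity of $C$ says $\UU_C$ is contractible; since $\UU_C$ is a connected component of $\UU$ with loop space $(C = C)\equiv\Aut(C)$ by univalence, this is equivalent to $\Aut(C)$ being contractible. Second, because $C$ is $n$-connected and $X$ is an $n$-type, and $n$-truncation commutes with binary products, there is a canonical equivalence $\trunc{X\times C}_n\equiv\trunc{X}_n\times\trunc{C}_n\equiv X\times\unit\equiv X$. Hence the $n$-truncation map $q\colon\UU_{X\times C}\to\UU_X$ sending $Z$ to $\trunc{Z}_n$ is well defined: if $\trunc{Z = X\times C}$ then $\trunc{\trunc{Z}_n = X}$ by transporting the identification just given. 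Both $\UU_{X\times C}$ and $\UU_X$ are connected, being connected components of $\UU$.

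The heart of the proof is to compute the induced map on loop spaces. Modulo the identification $\trunc{X\times C}_n\equiv X$, the map $\Omega q$ is
\[ \Aut(X\times C)\longrightarrow\Aut(X), \qquad h\longmapsto\trunc{h}_n, \]
and I claim it is an equivalence. Given a self-equivalence $h$ of $X\times C$, set $g := \trunc{h}_n\in\Aut(X)$. Because $C$ is $n$-connected and $X$ is $n$-truncated, precomposition with $C\to\unit$ is an equivalence $(\unit\to X)\to(C\to X)$, so $(C\to X)\equiv X$ and hence $(X\times C\to X)\equiv(X\to X)$; consequently the first component of $h$ factors up to homotopy through $\pr_1\colon X\times C\to X$, giving a commuting square over $X$ whose vertical maps are first projections and whose bottom edge is $g$. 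As $h$ and $g$ are equivalences, $h$ restricts to an equivalence $C\equiv C$ between the fibers of these projections over $x$ and over $g(x)$, for each $x:X$; so the type of self-equivalences of $X\times C$ is equivalent to $\Aut(X)\times\prod_{x:X}\Aut(C)$ in a way that sends $h$ to $\bracket{\trunc{h}_n,\beta}$. Since $\Aut(C)$ is contractible, $\prod_{x:X}\Aut(C)$ is contractible, so this collapses to $\Aut(X)$ via $h\mapsto\trunc{h}_n$, as claimed. In particular $\Aut(X\times C)$ is an $n$-type, like $\Aut(X)$ (for an $n$-type $X$, the type $X\equiv X$ is again an $n$-type), so $\UU_{X\times C}$ and $\UU_X$ are connected $(n+1)$-types. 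Since a map between pointed connected $(n+1)$-types that is an equivalence on loop spaces is an equivalence (Whitehead's theorem for truncated types), $q$ is an equivalence, proving the lemma.

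Applying the lemma with $C = A$ and with $C = B$ and composing yields $\UU_{X\times A}\equiv\UU_{X\times B}$; feeding this together with the hypothesis $X\times A\not\equiv X\times B$ into Theorem~\ref{thm:uuequiv-aut} (whose use of function extensionality is subsumed by univalence) produces an automorphism $f\colon\UU\equiv\UU$ with $\trunc{f(X\times A) = X\times B}$, as required. I expect the main obstacle to be the loop-space computation: carrying out the fiberwise analysis of the self-equivalences of $X\times C$ with the coherences in place, and verifying that the equivalence $\Aut(X\times C)\equiv\Aut(X)$ so obtained really is $\Omega q$ rather than merely abstractly isomorphic to it. The remaining steps are routine bookkeeping.
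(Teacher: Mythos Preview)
Your proof is correct and follows the same overall architecture as the paper: reduce to Theorem~\ref{thm:uuequiv-aut} by showing $\UU_{X\times A}\equiv\UU_X\equiv\UU_{X\times B}$, and establish each half by checking that a specific pointed map of connected types is an equivalence on loop spaces. The main difference is the direction of the comparison map. The paper uses $(Z\mapsto Z\times A):\UU_X\to\UU_{X\times A}$, whose loop-space map is $L:g\mapsto g\times\id_A$, and constructs a one-sided inverse $M(h)(x)=\pr_1(h(x,a_0))$; you instead use $n$-truncation $q:(W\mapsto\trunc{W}_n):\UU_{X\times C}\to\UU_X$, whose loop-space map $h\mapsto\trunc{h}_n$ is essentially the paper's~$M$. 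Your fiberwise decomposition $\Aut(X\times C)\equiv\Aut(X)\times\prod_{x:X}\Aut(C)$ is the same analysis the paper carries out more explicitly, and both collapse the second factor via rigidity. Two small remarks: your route needs $n$-truncation as a type former to even define $q$, whereas the paper's map $Z\mapsto Z\times A$ does not; and your appeal to Whitehead for $(n{+}1)$-types is harmless but unnecessary, since for pointed connected types a loop-space equivalence already forces the fiber to be a connected type with contractible loop space, hence a proposition, hence contractible.
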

\begin{proof}
  We will show that $\UU_{X\times A} \simeq \UU_{X\times B}$, by showing that both are equivalent to $\UU_X$.
  It suffices to consider $A$.
  We have $(Z\mapsto Z\times A) : \UU_X \to \UU_{X\times A}$, and since both types are connected it suffices to show that it induces an equivalence of loop spaces $\Omega\UU_X \to \Omega\UU_{X\times A}$, or equivalently that the induced map $L:(X\equiv X) \to (X\times A \equiv X\times A)$ is an equivalence.
  Since $A$ is $n$-connected for $n\ge -1$, we have $\trunc A$; so since being an equivalence is a proposition we may assume given $a_0:A$.

  We claim that for all $a:A$, $x:X$, and $f:X\times A \to X\times A$ we have
  \begin{equation}
    \pr_1(f(x,a)) = \pr_1(f(x,a_0)).\label{eq:rigid-prod-star}
  \end{equation}
  Since this goal is an equality in the $n$-type $X$, it is an $(n-1)$-type.
  And since $A$ is $n$-connected, the map $a_0:\unit\to A$ is $(n-1)$-connected by~\cite[Lemma 7.5.11]{hottbook}.
  Thus, by~\cite[Lemma 7.5.7]{hottbook}, it suffices to assume that $a=a_0$, in which case~\eqref{eq:rigid-prod-star} is clear.

  It follows from~\eqref{eq:rigid-prod-star} that if we define $M:(X\times A \to X\times A)\to (X\to X)$ by $M(f)(x) = \pr_1(f(x,a_0))$, then $M$ preserves composition and identities.
  Thus it preserves equivalences, inducing a map $(X\times A \equiv X\times A)\to (X\equiv X)$.
  We easily have $M\circ L = \id$, so to prove $L\circ M = \id$ it suffices to show that $M$ is left-cancellable, i.e.\ that $(M f = M g) \to (f = g)$.
  Since $M$ preserves composition, for this it suffices to show that if $M f = \id$ then $f = \id$.
  But if $M f = \id$, then by~\eqref{eq:rigid-prod-star} we have $\pr_1(f(x,a)) = x$ for all $a:A$.
  Thus $f(x,a) = (x,g_x(a))$, where $g_x:A\equiv A$ for each $x:X$.
  But $A$ is rigid, so each $g_x = \id$, hence $f=\id$.
\end{proof}

For instance, we could take $n=0$ and $X=\bool$, so that $X\times A \equiv A+A$.
Thus if $A$ and $B$ are any connected rigid types, an automorphism of $\UU$ can swap $A+A$ with $B+B$.

There might also be rigid types that are not of the form $BG$, or types $A,B$ not built out of rigid ones but such that $A\not\equiv B$ and $\UU_A\equiv \UU_B$.
But now we will leave such questions and turn to the converse: when does an automorphism of $\UU$ imply excluded middle?

\subsection{Excluded middle from automorphisms}
\label{sec:excluded-middle-from}

In fact, without function extensionality, we can only derive a slightly weaker form of excluded middle from a nontrivial automorphism of the universe.
As defined in the introduction, the \emph{law of excluded middle} (\LEM) is
\[
\prod_{P:\UU} \isProp(P) \to P + \neg P.
\]
We will instead derive the law of \emph{double-negation elimination} (\DNE), which is
\[
\prod_{P:\UU} \isProp(P) \to \neg\neg P \to P.
\]
Notice that if $\Empty$-valued function extensionality holds, then
$\neg P$ is a proposition (even if $P$ is not a proposition) and hence,
if $P$ is a proposition, $P+\neg P$ is a proposition equivalent to
$P\vee \neg P$.
In first-order or higher-order logic, the corresponding schemas or
axioms of excluded middle and double-negation elimination are
equivalent, but, in type theory, one direction seems to require some
amount of function extensionality:
\begin{lemma} \label{lem:is:dne} \leavevmode
  \begin{enumerate}
  \item \label{lem:gives:dne} \LEM\ implies \DNE.
  \item \label{dne:gives:lem} \DNE\ implies \LEM\, assuming $\Empty$-valued function extensionality.
  \end{enumerate}
\end{lemma}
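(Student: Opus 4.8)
The plan is to prove the two parts essentially by direct manipulation, using only the logical content of the principles involved. For part~\ref{lem:gives:dne}, I would let $P$ be a proposition and assume $\neg\neg P$; applying \LEM\ to $P$ yields $P + \neg P$, and I do case analysis. In the left case we are done immediately. In the right case we have both $\neg P$ and $\neg\neg P$, which gives an element of $\Empty$, from which we conclude $P$ by the elimination rule for $\Empty$. No function extensionality is needed here since we never need $\neg P$ or $\neg\neg P$ to be a proposition --- we only ever use them.

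For part~\ref{dne:gives:lem}, I would let $P$ be a proposition and aim to prove $P + \neg P$. The natural move is to apply \DNE\ to the type $P + \neg P$ itself, but this requires knowing that $P + \neg P$ is a proposition, and we also need $\neg\neg(P+\neg P)$. The latter is a constructive tautology (the double negation of excluded middle always holds): given $h : \neg(P + \neg P)$, we get $\neg P$ by composing $h$ with $\inl$, hence $\inr(\text{that}) : P + \neg P$, hence $h$ of it lands in $\Empty$. The former --- that $P + \neg P$ is a proposition --- is exactly where $\Empty$-valued function extensionality enters: it makes $\neg P$ a proposition, and since $P$ and $\neg P$ are disjoint propositions (if $p:P$ and $q:\neg P$ then $q(p):\Empty$), their sum $P + \neg P$ is a proposition. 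With those two facts in hand, \DNE\ applied to $P+\neg P$ delivers $P + \neg P$.

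I expect the only real subtlety --- it is more a point to be careful about than an obstacle --- to be the verification in part~\ref{dne:gives:lem} that $P+\neg P$ is a proposition: one must check that two elements of a coproduct of disjoint propositions are equal, treating all four combinations of $\inl/\inr$, using in the mixed cases that the disjointness produces an element of $\Empty$ and in the matching cases that $P$, respectively $\neg P$ (a proposition by $\Empty$-valued function extensionality), has at most one element. Everything else is routine $\lambda$-calculus with $\Empty$-elimination.
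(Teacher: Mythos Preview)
Your proposal is correct and follows essentially the same approach as the paper's own proof: part~(\ref{lem:gives:dne}) by case analysis on \LEM\ applied to $P$, and part~(\ref{dne:gives:lem}) by applying \DNE\ to $P+\neg P$ after using $\Empty$-valued function extensionality to see that this is a proposition and the constructive tautology $\neg\neg(P+\neg P)$. Your write-up simply spells out a few details (the explicit $\Empty$-elimination in the second case of part~(\ref{lem:gives:dne}), and the four-case verification that a sum of disjoint propositions is a proposition) that the paper leaves implicit.
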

\begin{proof}
  (\ref{lem:gives:dne}):~Assume \LEM\ and let $P:\UU$ with
  $\isProp(P)$ and assume $\neg\neg P$.  By excluded middle, either
  $P$ or $\neg P$. In the first case we are done, and the second
  contradicts $\neg \neg P$.
  (\ref{dne:gives:lem}):~Assume \DNE\ and let $P:\UU$ with
  $\isProp(P)$. By $\Empty$-valued function extensionality,
  $P + \neg P$ is a proposition, and hence \DNE\ gives $P + \neg P$,
  because we always have $\neg \neg (P + \neg P)$.
\end{proof}

\begin{lemma}
\label{dne:dne-negation}
\DNE\ holds if and only if every proposition is logically equivalent
to the negation of some type.
\end{lemma}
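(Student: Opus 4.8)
The statement is a biconditional, so the plan is to prove the two directions separately. The backward direction is the more interesting one; the forward direction is essentially the observation that $P \leftrightarrow \neg\neg P$ for propositions under \DNE.

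For the forward direction, assume \DNE\ and let $P$ be a proposition. I would simply take the ``some type'' to be $\neg P$, and claim that $P$ is logically equivalent to $\neg\neg P = \neg(\neg P)$. One direction of this logical equivalence, $P \to \neg\neg P$, is a tautology that needs no hypotheses at all. The other direction, $\neg\neg P \to P$, is exactly the instance of \DNE\ at the proposition $P$. So $P \leftrightarrow \neg\neg P$, exhibiting $P$ as logically equivalent to the negation of the type $\neg P$.

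For the backward direction, assume that every proposition is logically equivalent to the negation of some type, and let $P$ be a proposition with a proof of $\neg\neg P$; I must produce a proof of $P$. By hypothesis there is a type $A$ together with functions $g : P \to \neg A$ and $h : \neg A \to P$. The key fact is that negations always satisfy double-negation elimination, i.e.\ $\neg\neg(\neg A) \to \neg A$ (this is the standard combinator, provable with no extra axioms: given $k : \neg\neg\neg A$ and $a : A$, apply $k$ to the map $\lambda m.\, m\, a : \neg\neg A$). So the plan is: from $\neg\neg P$ and $g$ I get $\neg\neg(\neg A)$ — more precisely, push $\neg\neg P$ along $g$ using the functoriality of $\neg\neg$ (from $g : P \to \neg A$ we get $\neg\neg g : \neg\neg P \to \neg\neg\neg A$, again no axioms needed) — then collapse $\neg\neg\neg A$ to $\neg A$, and finally apply $h : \neg A \to P$ to conclude $P$.

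I do not expect any real obstacle here: both directions are short manipulations of negations, and crucially neither uses function extensionality (the logical equivalence $\leftrightarrow$ is defined in the introduction with no coherence conditions, so nothing needs to be a proposition). The one point to be careful about is to use logical equivalence throughout and not accidentally invoke propositional extensionality or require $\neg A$ to be a proposition; everything goes through with the bare combinators $\neg\neg\neg A \to \neg A$ and the functorial action of $\neg\neg$ on functions.
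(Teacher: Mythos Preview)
Your proposal is correct and matches the paper's proof essentially step for step: the forward direction takes the witnessing type to be $\neg P$, and the backward direction uses the chain $\neg\neg P \to \neg\neg\neg A \to \neg A \to P$ via functoriality of $\neg\neg$ and triple-negation reduction. The only cosmetic difference is that the paper derives $\neg\neg\neg A \to \neg A$ by instantiating $X \to \neg\neg X$ at $X = \neg A$ and contraposing, whereas you give the direct combinator; these are the same argument.
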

\begin{proof}
  ($\Rightarrow$):~\DNE\ gives that any proposition $P$ is logically
  equivalent to the negation of the type $\neg P$.
  ($\Leftarrow$):~For any two types $A$ and $B$, we have that
  $A \to B$ implies $\neg B \to \neg A$. Hence $A \to B$ also gives
  $\neg \neg A \to \neg \neg B$. And, because $X \to \neg \neg X$ for
  any type $X$, we have $\neg \neg \neg X \to \neg X$. Therefore, if
  $P$ is logically equivalent to the negation of $X$, we have the
  chain of implications
  $\neg \neg P \to \neg \neg \neg X \to \neg X \to P$.
\end{proof}

Our first automorphism of the universe constructed from excluded
middle swapped the empty type with the unit type.  We now show that conversely, any
such automorphism implies \DNE\, and hence, assuming $\Empty$-valued
function extensionality, also~\LEM.
In fact, not even an \emph{embedding} of $\UU$ into itself that maps
the unit type to the empty type is possible without classical axioms:
\begin{theorem} \label{left:cancellable} Assuming propositional
  extensionality, if there is a left-cancellable map $f: \UU \to \UU$
  with $f(\unit)=\Empty$, then \DNE\ holds.
\end{theorem}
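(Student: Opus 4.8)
The plan is to reduce to Lemma~\ref{dne:dne-negation}, which characterizes \DNE\ as the statement that every proposition is logically equivalent to the negation of some type. So, given an arbitrary proposition $P$, it suffices to produce a type $X$ with $P \leftrightarrow \neg X$. Since the only concrete data we have about $f$ is that $f(\unit) = \Empty$, the natural candidate is $X := f(P)$, and the goal becomes to prove $P \leftrightarrow \neg f(P)$.

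First I would handle the direction $P \to \neg f(P)$. If we have $p : P$, then $P$ is an inhabited proposition, hence logically equivalent to $\unit$, so propositional extensionality gives $P = \unit$ and therefore $f(P) = f(\unit) = \Empty$. Transporting along this identification sends any element of $f(P)$ to an element of $\Empty$, so $\neg f(P)$.

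Next, the direction $\neg f(P) \to P$. The key observation is that a proof of $\neg f(P)$ makes $f(P)$ a mere proposition: given $a, b : f(P)$, applying $\neg f(P)$ to $a$ produces an element of $\Empty$, and $\Empty$-elimination then yields $a = b$. Moreover $f(P)$ is logically equivalent to $\Empty$ (the forward map is $\neg f(P)$ itself, the backward map is $\Empty$-elimination). So propositional extensionality, applied to the two propositions $f(P)$ and $\Empty$, gives $f(P) = \Empty$, and composing with the hypothesis $f(\unit) = \Empty$ yields $f(P) = f(\unit)$. Now left-cancellability of $f$ turns this into $P = \unit$, and transporting $\ttt$ back along this identification produces a point of $P$. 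Combining the two directions gives $P \leftrightarrow \neg f(P)$, so by Lemma~\ref{dne:dne-negation} we obtain \DNE.

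I expect the delicate point to be the backward direction, specifically the realization that the hypothesis $\neg f(P)$ is exactly what is needed both to make $f(P)$ a proposition (so that propositional extensionality becomes applicable) and, through left-cancellability, to convert the resulting identification $f(P) = \Empty$ back into $P = \unit$. Everything uses only propositional extensionality and no form of function extensionality, which is consistent with the conclusion being \DNE\ rather than the formally stronger \LEM.
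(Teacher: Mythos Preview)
Your proof is correct and is essentially the same as the paper's. The paper presents the argument as a chain of logical equivalences $P \leftrightarrow (P=\unit) \leftrightarrow (f(P)=f(\unit)) \leftrightarrow (f(P)=\Empty) \leftrightarrow \neg f(P)$, noting in parentheses exactly your delicate point that $\neg f(P)$ makes $f(P)$ a proposition so that propositional extensionality applies; you have simply unwound this chain into the two directions explicitly.
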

\begin{proof}
  For an arbitrary proposition $P$, we have:
  \begin{align*}
    P \,\!& \leftrightarrow  P = \unit \quad
    && \mbox{(by propositional extensionality)}
    \\
        &\leftrightarrow  f(P)=f(\unit)
    && \mbox{(because $f$ is left-cancellable)}
    \\
        &\leftrightarrow f(P)=\Empty
    && \mbox{(by the assumption that $f(\unit)=\Empty$)}
    \\
        &\leftrightarrow \neg f(P)
          && \mbox{(by propositional extensionality).}
  \end{align*}
  (Note that if $\neg f(P)$, then $f(P) \leftrightarrow \Empty$, so $f(P)$ is a
  proposition and we can apply propositional extensionality to get
  $f(P)=\Empty$.)
  Hence $P$ is logically equivalent to the negation of the type
  $f(P)$, and therefore Lemma~\ref{dne:dne-negation} gives \DNE.
\end{proof}

\begin{corollary}
\label{thm:automorphism}
Assuming propositional extensionality, if there is an automorphism of the
universe that maps the unit type to the empty type, then \DNE\ holds.
\end{corollary}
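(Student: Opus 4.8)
The plan is to deduce this immediately from Theorem~\ref{left:cancellable}, whose only substantive hypothesis beyond propositional extensionality is the existence of a \emph{left-cancellable} self-map of $\UU$ that sends $\unit$ to $\Empty$. So the single thing to verify is that an automorphism of the universe is in particular left-cancellable. By definition an automorphism is an equivalence $f:\UU\equiv\UU$, and such an $f$ (in the sense of the introduction, or even merely in the weaker sense of having a two-sided inverse, which is all we need here) comes equipped with a function $r:\UU\to\UU$ with $r(f(X))=X$ for all $X:\UU$. Hence if $f(X)=f(Y)$, applying $r$ to both sides and using $r\comp f=\id$ at $X$ and at $Y$ yields $X=r(f(X))=r(f(Y))=Y$; thus $f$ is left-cancellable.

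The hypothesis that the automorphism ``maps the unit type to the empty type'' is exactly the statement $f(\unit)=\Empty$. (If one only has $f(\unit)\equiv\Empty$, then, since $\Empty$ is a proposition, so is $f(\unit)$, and propositional extensionality upgrades the equivalence to an identity $f(\unit)=\Empty$; so the two readings coincide under our standing assumption.) All hypotheses of Theorem~\ref{left:cancellable} are therefore met, and we conclude that \DNE\ holds. Moreover, by Lemma~\ref{lem:is:dne}(\ref{dne:gives:lem}), if we additionally assume $\Empty$-valued function extensionality this gives the full law of excluded middle.

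There is essentially no obstacle: all the work is in Theorem~\ref{left:cancellable}, and this corollary is just the remark that equivalences are left-cancellable. The reason Theorem~\ref{left:cancellable} was stated for left-cancellable maps rather than automorphisms is precisely to make transparent that neither surjectivity of $f$ nor the existence of a right inverse is used --- a left inverse alone suffices.
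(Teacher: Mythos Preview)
Your proof is correct and matches the paper's intended argument: the corollary is stated without proof in the paper, as an immediate consequence of Theorem~\ref{left:cancellable} using that an equivalence has a left inverse and is therefore left-cancellable. Your additional remarks (about upgrading $f(\unit)\equiv\Empty$ to an equality via propositional extensionality, and about obtaining \LEM\ under $\Empty$-valued function extensionality) are correct elaborations but not needed for the corollary as stated.
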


Now let us further assume univalence and propositional truncations.
This implies function extensionality, so the difference between \DNE\ and \LEM\ disappears.
Furthermore, we can additionally generalize the result as follows.  Say
that a type $A$ is \emph{inhabited} if the unique map $A\to\unit$ is
surjective. This is equivalent to giving an element of the propositional
truncation~$\trunc{A}$.
\begin{lemma}\label{lem:prop-equivalent}
  Assuming univalence and propositional truncations, if $A$ is an
  inhabited type, then any proposition $P$ is logically equivalent to
  the identity type $(P \times A) = A$.
\end{lemma}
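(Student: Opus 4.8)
The plan is to prove the two implications of the logical equivalence separately. Univalence is only needed for one of them; the other direction goes through with plain path induction, so I would keep them clearly apart.

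\emph{From $P$ to $(P\times A)=A$.} Suppose we are given $p:P$. Then $\pr_2 : P\times A\to A$ and the function $a\mapsto(p,a) : A\to P\times A$ are mutually inverse: one round-trip is the identity on the nose, and the other sends $(x,a)$ to $(p,a)$, which equals $(x,a)$ because $P$ is a proposition. Hence $P\times A\equiv A$, and univalence promotes this equivalence to an identification $(P\times A)=A$.

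\emph{From $(P\times A)=A$ to $P$.} Given $q:(P\times A)=A$, the canonical map $(X=Y)\to(X\equiv Y)$, defined by path induction (so no univalence is used here), yields an equivalence $P\times A\equiv A$; in particular its inverse is a function $g:A\to P\times A$. Now the goal $P$ is a proposition and $A$ is inhabited, i.e.\ we have $\trunc{A}$; since we are proving a proposition we may eliminate this truncation to obtain a point $a:A$, whereupon $\pr_1(g(a)):P$.

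There is essentially no hard step here. The only things worth flagging are that univalence is genuinely used in the first implication, to promote the equivalence $P\times A\equiv A$ to an equality in $\UU$, and that in the second implication we are entitled to strip the truncation off $\trunc{A}$ precisely because the conclusion $P$ is a proposition. Note also that the second implication only ever uses the function $A\to P\times A$ and not the full equivalence, so propositional truncation together with path induction already suffices there; univalence plays no role in that direction.
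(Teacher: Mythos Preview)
Your argument is correct. The forward direction is essentially the same as the paper's (the paper phrases it as $P\equiv\unit$ hence $P\times A\equiv A$, you build the equivalence by hand from $p:P$; these are the same thing).

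For the backward direction the paper takes a different route: it applies propositional truncation functorially, arguing that from $(P\times A)=A$ one gets $\trunc{P\times A}=\trunc{A}=\unit$ (using inhabitedness of $A$ and univalence for propositions), then $\trunc{P}\times\trunc{A}=\unit$, hence $P=\unit$. Your approach---extract a function $A\to P\times A$ via $\mathsf{idtoequiv}$, eliminate the truncation $\trunc{A}$ into the proposition $P$, and project---is more direct and, as you note, genuinely avoids univalence in this direction. The paper's version leans on univalence (or at least propositional extensionality) a second time to identify $\trunc{A}$ with $\unit$ and $P$ with $\unit$; yours isolates univalence to the forward implication only, which is a small sharpening.
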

\begin{proof}
  If $P$ then $P \equiv \unit$, so $(P \times A) \equiv A$, and hence
  by univalence $(P \times A) = A$.  Conversely, assume $(P \times A)
  = A$.  Then $\trunc{P \times A} = \trunc{A} = \unit$ by univalence,
  as $A$ is inhabited.  So $\trunc{P}\times\trunc{A}=\unit$, and hence
  $P=\unit$.
\end{proof}
Using this, we can weaken the hypothesis of Lemma~\ref{left:cancellable} to the requirement that $f$
maps some inhabited type to the empty type, and get the same
conclusion, at the expense of
requiring univalence rather than just propositional extensionality:
\begin{lemma} \label{left:cancellable:bis}
  Assuming univalence and propositional truncations, if there is a
  left-cancellable map $f: \UU \to \UU$ with $f(A)=\Empty$ for some
  inhabited type $A$, then excluded middle holds.
\end{lemma}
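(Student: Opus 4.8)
The plan is to run the same argument as in Theorem~\ref{left:cancellable}, replacing the use of propositional extensionality (to identify a proposition $P$ with an equation between types) by Lemma~\ref{lem:prop-equivalent}, which identifies $P$ with the equation $(P\times A)=A$ for our fixed inhabited type $A$. First I would note that univalence and propositional truncations together imply function extensionality, so that by Lemma~\ref{lem:is:dne}(\ref{dne:gives:lem}) it suffices to prove \DNE; and by Lemma~\ref{dne:dne-negation} it suffices to exhibit, for an arbitrary proposition $P$, a type whose negation is logically equivalent to $P$.

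Given a proposition $P$, the natural candidate is the type $f(P\times A)$. I would establish the chain of logical equivalences
\[
P \;\leftrightarrow\; (P\times A)=A \;\leftrightarrow\; f(P\times A)=f(A) \;\leftrightarrow\; f(P\times A)=\Empty \;\leftrightarrow\; \neg f(P\times A).
\]
The first step is Lemma~\ref{lem:prop-equivalent} applied to the inhabited type $A$. The second step uses that $f$ is left-cancellable for the forward–to–back direction $f(P\times A)=f(A)\to (P\times A)=A$, and simply $\mathsf{ap}_f$ (i.e.\ that equal inputs give equal outputs) for the other direction. The third step is substitution along the hypothesis $f(A)=\Empty$. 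The last step is the standard fact $X=\Empty \leftrightarrow \neg X$: if $X=\Empty$ then certainly $\neg X$; conversely, if $\neg X$ then $X$ is a proposition equivalent to $\Empty$, so univalence gives $X=\Empty$. Composing the chain, $P$ is logically equivalent to the negation of $f(P\times A)$, so Lemma~\ref{dne:dne-negation} yields \DNE\ and hence, via function extensionality and Lemma~\ref{lem:is:dne}(\ref{dne:gives:lem}), \LEM.

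I do not expect a serious obstacle here: the statement is essentially a strengthening of Theorem~\ref{left:cancellable} with a more flexible hypothesis, and Lemma~\ref{lem:prop-equivalent} is precisely the tool that makes the weakening from $f(\unit)=\Empty$ to $f(A)=\Empty$ go through. The only point requiring a little care is the harmless but non-symmetric nature of the step $X=\Empty\leftrightarrow\neg X$ (only the right-to-left direction needs univalence, and it needs it only after observing $\neg X$ forces $X$ to be a proposition), and remembering to invoke that univalence plus propositional truncation give function extensionality so that the passage between \DNE\ and \LEM\ is legitimate.
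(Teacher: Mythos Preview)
Your proposal is correct and follows essentially the same route as the paper: the same chain of logical equivalences via Lemma~\ref{lem:prop-equivalent}, left-cancellability of $f$, the hypothesis $f(A)=\Empty$, and the equivalence $X=\Empty\leftrightarrow\neg X$, concluding via Lemma~\ref{dne:dne-negation} and then Lemma~\ref{lem:is:dne}. The only cosmetic difference is that the paper invokes propositional extensionality (a consequence of univalence) for the last step, whereas you appeal to univalence directly; and univalence alone already gives function extensionality, so you need not also mention propositional truncation there.
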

\begin{proof}
  For an arbitrary proposition $P$, we have:
  \begin{align*}
    P \,\!& \leftrightarrow  (P \times A) = A \quad
    && \mbox{(by Lemma~\ref{lem:prop-equivalent})}
    \\
        &\leftrightarrow  f(P \times A)=f(A)
    && \mbox{(because $f$ is left-cancellable)}
    \\
        &\leftrightarrow f(P \times A)=\Empty
    && \mbox{(by the assumption that $f(A)=\Empty$)}
    \\
        &\leftrightarrow \neg f(P \times A)
          && \mbox{(by propositional extensionality).}
  \end{align*}
  Hence $P$ is logically equivalent to the negation of the type
  $f(P\times A)$, and therefore Lemma~\ref{dne:dne-negation} gives
  \DNE. But univalence gives function extensionality, and hence
  Lemma~\ref{lem:is:dne} gives \LEM.
\end{proof}
\begin{theorem}
\label{thm:automorphism:bis}
Assuming univalence and propositional truncations, if there is an
automorphism of the universe that maps some inhabited type to the
empty type, then excluded middle holds.
\end{theorem}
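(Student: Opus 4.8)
The plan is to deduce this immediately from Lemma~\ref{left:cancellable:bis}. The only gap to fill is the observation that an automorphism $f:\UU\equiv\UU$ is in particular a left-cancellable map $\UU\to\UU$: by definition an equivalence comes with a left inverse $r:\UU\to\UU$ satisfying $r(f(X))=X$ for all $X:\UU$, so from a proof of $f(X)=f(Y)$ we obtain $X = r(f(X)) = r(f(Y)) = Y$ by applying $r$ to both sides and composing with the given equality. Thus any automorphism of $\UU$ is left-cancellable.

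Granting this, suppose $f:\UU\equiv\UU$ maps some inhabited type $A$ to the empty type, i.e.\ $f(A)=\Empty$. Then $f$, regarded as a left-cancellable map $\UU\to\UU$ with $f(A)=\Empty$ for the inhabited type $A$, satisfies exactly the hypotheses of Lemma~\ref{left:cancellable:bis}. Since we are assuming univalence and propositional truncations, that lemma yields excluded middle, which completes the proof.

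There is essentially no obstacle here: the genuine content was already carried out in Lemma~\ref{lem:prop-equivalent} (which uses univalence and truncation to show that, for an inhabited type $A$, every proposition $P$ is logically equivalent to the identity type $(P\times A)=A$), in Lemma~\ref{dne:dne-negation} (characterizing \DNE\ in terms of propositions being negations of types), and in Lemma~\ref{lem:is:dne} (upgrading \DNE\ to \LEM\ via function extensionality, which follows from univalence). The present theorem is just the specialization of Lemma~\ref{left:cancellable:bis} to the case where the left-cancellable self-map of $\UU$ happens to be an equivalence, so the write-up should be a single short paragraph invoking that lemma.
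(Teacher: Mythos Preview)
Your proposal is correct and matches the paper's approach exactly: the paper states Theorem~\ref{thm:automorphism:bis} immediately after Lemma~\ref{left:cancellable:bis} without giving a separate proof, treating it as the obvious specialization of that lemma to the case where the left-cancellable map is an automorphism. Your explicit verification that equivalences are left-cancellable is the only thing one might add, and it is routine.
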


\begin{corollary}\label{cor:0:beers}
  Assuming univalence and propositional truncations, if there is an
  automorphism $g : \UU \to \UU$ of the universe with $g(\Empty) \ne
  \Empty$, then the double negation \[ \neg \neg \prod_{P:\UU}
  \isProp(P) \to P + \neg P \] of the law of excluded middle
  holds.
\end{corollary}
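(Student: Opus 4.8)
The plan is to obtain this as a contrapositive repackaging of Theorem~\ref{thm:automorphism:bis}. Since $\neg\neg\LEM$ unfolds to $\neg\LEM\to\Empty$, it suffices to assume $\neg\LEM$ and derive~$\Empty$; so suppose $\neg\LEM$.

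First I would use that $g$, being an automorphism, has a two-sided inverse, and set $A:=g^{-1}(\Empty)$, so that $g(A)=\Empty$. Applying Theorem~\ref{thm:automorphism:bis} to the automorphism~$g$ together with the type~$A$ and the witness $g(A)=\Empty$ shows that if $A$ is inhabited then excluded middle holds, i.e.\ $\trunc A\to\LEM$; combined with $\neg\LEM$ this gives $\neg\trunc A$, and since the canonical map $A\to\trunc A$ makes negation contravariant, $\neg A$. Next, using function extensionality (which follows from univalence), $\neg A$ produces an equivalence $A\equiv\Empty$ --- the maps $A\to\Empty$ and $\Empty\to A$ are mutually inverse, one composite being $\Empty$-induction and the other sending any $a:A$ to a contradiction obtained from $\neg A$ --- and then univalence gives $A=\Empty$. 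Applying $g$ to this identity yields $g(\Empty)=g(A)=\Empty$, contradicting the hypothesis $g(\Empty)\neq\Empty$, which produces the desired element of~$\Empty$.

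I do not expect a real obstacle: the whole argument is the chain
\[
\neg\LEM\ \to\ \neg\trunc{g^{-1}(\Empty)}\ \to\ \neg\bigl(g^{-1}(\Empty)\bigr)\ \to\ \bigl(g^{-1}(\Empty)=\Empty\bigr)\ \to\ \bigl(g(\Empty)=\Empty\bigr)\ \to\ \Empty,
\]
with the only routine inputs being the standard facts that $\neg B\leftrightarrow(B=\Empty)$ under univalence and that negation is contravariant along $A\to\trunc A$. The one point worth stating carefully is \emph{why} we obtain only the double negation rather than $\LEM$ itself: the hypothesis $g(\Empty)\neq\Empty$ merely makes $g^{-1}(\Empty)$ ``not not inhabited'', whereas Theorem~\ref{thm:automorphism:bis} requires a genuinely inhabited type.
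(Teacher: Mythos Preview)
Your proof is correct and follows essentially the same approach as the paper: both feed the inverse relationship between $g$ and $g^{-1}$ into Theorem~\ref{thm:automorphism:bis} and exploit the equivalence $(X\neq\Empty)\leftrightarrow\neg\neg X$. The only cosmetic difference is that the paper applies the theorem to $f:=g^{-1}$ with the inhabited type $g(\Empty)$ and then double-negates the implication $g(\Empty)\to\LEM$ directly, whereas you apply it to $g$ with $A:=g^{-1}(\Empty)$ and run the contrapositive explicitly, incurring the small extra step of transporting $A=\Empty$ along $g$.
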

(Note that this is not the same as 
\[ \prod_{P:\UU}  \isProp(P) \to \neg \neg (P + \neg P ), \]
which is of course constructively valid without extra assumptions.)

\begin{proof}
  Let $f$ be the inverse of $g$. If $g(\Empty)$ then
  $\trunc{g(\Empty)}$, and because $f$ maps $g(\Empty)$ to
  $\Empty$, we conclude that excluded middle holds by
  Theorem~\ref{thm:automorphism:bis}.  But the assumption
  $g(\Empty) \ne \Empty$ is equivalent to $\neg \neg g(\Empty)$ by
  propositional extensionality, and so it implies the double negation of
  excluded middle.
\end{proof}

It is in general an open question for which $X$ the existence of an
automorphism $f:\UU \to \UU$ with $f(X)\neq X$ implies a non-provable
consequence of excluded middle~\cite{automorphisms:escardo:bet}. Not
even for $X=\unit$ do we know whether this is the case. However, the
following two cases for $X$ follow from the case $X=\Empty$ discussed
above:
\begin{corollary}\label{cor:lem:beers}
  Assuming univalence and propositional truncations, for universes
  $\UU:\VV$, if there is an automorphism $f:\VV \to \VV$ with
  $f(X) \ne X$ for $X=\LEM_\UU$ or $X=\neg \neg \LEM_\UU$, then
  $\neg \neg \LEM_\UU$ holds.
\end{corollary}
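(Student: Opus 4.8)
The plan is to reduce both admissible cases to the already-settled case $X = \Empty$, i.e.\ to Corollary~\ref{cor:0:beers}. Recall that univalence implies function extensionality, so both $\LEM_\UU$ and $\neg\neg\LEM_\UU$ are propositions. For either choice of $X$ one checks that $\neg X$ is logically equivalent to $\neg \LEM_\UU$: this is immediate when $X = \LEM_\UU$, and follows from $\neg\neg\neg A \leftrightarrow \neg A$ when $X = \neg\neg\LEM_\UU$. Consequently, whichever disjunct of the hypothesis we are given, under the additional assumption $\neg \LEM_\UU$ the type $X$ is an empty proposition, hence $X \equiv \Empty$, and so $X = \Empty$ by univalence.

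I would then argue as follows. To prove $\neg\neg\LEM_\UU$, assume $\neg\LEM_\UU$. By the observation above, $X = \Empty$ as elements of $\VV$, so transporting the hypothesis $f(X) \ne X$ along this equality yields $f(\Empty) \ne \Empty$. Applying Corollary~\ref{cor:0:beers} to the universe $\VV$ and the automorphism $f:\VV\to\VV$ then gives $\neg\neg\LEM_\VV$. Since $\UU:\VV$, every proposition of $\UU$ is a proposition of $\VV$, so $\LEM_\VV \to \LEM_\UU$ and hence $\neg\neg\LEM_\VV \to \neg\neg\LEM_\UU$; thus $\neg\neg\LEM_\UU$ holds. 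This contradicts the assumption $\neg\LEM_\UU$, so $\neg\neg\LEM_\UU$ holds unconditionally.

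There is no real obstacle here: essentially all the content sits in Corollary~\ref{cor:0:beers}, and the only care needed is with the double-negation bookkeeping --- treating the two choices of $X$ uniformly via $\neg X \leftrightarrow \neg\LEM_\UU$, and correctly transporting $f(X)\ne X$ along the equality $X = \Empty$. Note that one cannot substitute Theorem~\ref{thm:automorphism:bis} here, since under $\neg\LEM_\UU$ the relevant type becomes empty rather than inhabited; it is exactly the ``$\ne$'' formulation provided by Corollary~\ref{cor:0:beers} that applies.
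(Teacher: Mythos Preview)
Your proof is correct and follows essentially the same route as the paper: assume $\neg\LEM_\UU$, conclude $X=\Empty$, invoke Corollary~\ref{cor:0:beers} to obtain $\neg\neg\LEM_\VV$, and derive a contradiction via $\neg\neg\LEM_\VV \to \neg\neg\LEM_\UU$. Your write-up simply makes explicit the details (propositionality of $X$, the transport of $f(X)\ne X$, the uniform treatment of the two cases via $\neg X \leftrightarrow \neg\LEM_\UU$) that the paper leaves implicit.
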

\begin{proof}
  Suppose that $\neg\LEM_\UU$, and hence $X=\Empty$.  By
  Corollary~\ref{cor:0:beers}, we obtain $\neg\neg\LEM_\VV$, which
  implies $\neg\neg \LEM_\UU$, contradicting the assumption.
\end{proof}

\typesorarxiv{
  \subparagraph*{Acknowledgements.}
}{
  \paragraph*{Acknowledgements.}
}

The first-named author would like to thank Uday Reddy for discussions
about parametricity.
We would also like to thank Jean-Philippe Bernardy for helpful comments, and Andrej Bauer for
discussions and questions.
The fact that the implication $\DNE \to \LEM$ requires $\Empty$-valued
function extensionality was spotted by one of the referees. Since
then, we have formalized the results to make sure we didn't miss
similar assumptions. All results of the paper, except
Section~\ref{sec:autom-that-do}, are formalized in
Agda~\cite{auke:development}. All results of
Sections~\ref{sec:autom-that-do} and~\ref{sec:excluded-middle-from}
are formalized in Coq~\cite{hottcoq} 
(in the files \verb+Spaces/BAut/Rigid.v+ and \verb+Spaces/Universe.v+).
(So some results have been formalized twice, and no numbered result has been left unformalized.)

\appendix

\bibliographystyle{plainurl}
\bibliography{parametricity-lem}

\end{document}